\documentclass[prl,aps,superscriptaddress,twocolumn]{revtex4-2}

\usepackage{subfigure}
\usepackage{xcolor}
\usepackage{graphicx}
\usepackage{dcolumn}
\usepackage{bm}
\usepackage{amssymb}
\usepackage{comment}
\usepackage{physics}

\usepackage{booktabs}
\usepackage{amsthm}

\usepackage[colorlinks=true]{hyperref}
\hypersetup{linkcolor=blue,citecolor=blue,urlcolor=blue}

\newcommand{\be}{\begin{equation}}
\newcommand{\ee}{\end{equation}}

\newcommand{\bea}{\begin{eqnarray}}
\newcommand{\eea}{\end{eqnarray}}
\newcommand{\ba}{\begin{align}}
\newcommand{\ea}{\end{align}}

\newcommand{\beq}{\begin{equation}}
\newcommand{\eeq}{\end{equation}}

\newtheorem{thm}{Theorem}

\newtheorem{proposition}{Proposition}
\newtheorem{lemma}{Lemma}

\begin{document}

\title{Floquet Majorana XYZ Codes with Tunable Logical Dynamics}

\author{Xinyu Sun}
\affiliation{Institute for Advanced Study, Tsinghua University, Beijing 100084, China}

\author{Hong Yao}
\email{yaohong@tsinghua.edu.cn}
\affiliation{Institute for Advanced Study, Tsinghua University, Beijing 100084, China}

\date{\today}

\begin{abstract}
We construct Floquet codes from the Majorana XYZ subsystem code with local realizations both in qubits and directly in microscopic Majorana modes with lattice size $L\times L$.
For a three-step cycle, odd $L$ supports one static logical qubit, whereas even $L$ supports two.
For even $L$ with $L=4n+2$, both logical qubits admit time-independent Pauli representatives.
For $L=4n$, by contrast, one Pauli of the second logical qubit remains fixed throughout the cycle, while every representative of its conjugate must evolve through the measurement cycle.
This distinction follows from a parity-dependent algebraic obstruction and disappears when the cycle is reduced to two steps, which restores a fully static logical pair. 
Thus, the same encoded logical degree of freedom can be switched between static and partially dynamical forms by the measurement schedule.
With one open direction, suitable protocols can be implemented using only local Majorana parity measurements.
To our knowledge, this is the first Floquet-code family with both a local qubit representation and a direct microscopic Majorana realization.
\end{abstract}

\maketitle

\textit{Introduction}.---
Quantum error correction (QEC) protects quantum information by encoding logical degrees of freedom nonlocally and is essential for fault-tolerant quantum computation~\cite{Shor1995Scheme,Steane1996Error,Knill1997Theory,Emanuel1998Resilient}.
Topological stabilizer codes provide paradigmatic examples, combining geometrically local measurements with topological protection~\cite{bravyi1998quantum,Dennis2002Topological,Kitaev2003Fault,Bombin2006Topological,Fowler2012Surface,Bombin2015Gauge}, while subsystem codes introduce noncommuting gauge operators whose measurements can simplify syndrome extraction~\cite{Kribs2005Unified,David2006Operator,Bacon2006Operator,Bombin2010Topological,Bravyi2013subsystem}.
Floquet quantum error-correcting codes further generalize this framework by replacing a fixed stabilizer group with a periodic sequence of incompatible measurements, leading to time-dependent instantaneous stabilizer groups and logical operators~\cite{Hastings2021dynamically,vuillot2021planar,Gidney2022benchmarking,Aasen2022Adiabatic,Haah2022boundaries,Davydova2023Floquet,Zhang2023cube,ellison2023floquet,Sullivan2023Floquet,Davydova2024quantum,Dua2024Engineering,Kesselring2024Anyon,Kobayashi2024Cross,Higgott2024Constructions,Fu2025error,watanabe2026floquet,zen2026low,capatos2026dynamic,Rodatz2026floquetifying}.

The implementation of a quantum code depends strongly on the native operations of the hardware.
Surface, subsystem, and Floquet codes have been explored in superconducting circuits~\cite{Andersen2020Repeated,Krinner2022Realizing,Google2023Suppressing}, trapped ions~\cite{Ryan2021Realization,Egan2021Fault,Postler2022Demonstration}, neutral-atom arrays~\cite{Bluvstein2023Logical,Xu2024constant,Bluvstein2026Fault}, and Majorana-based architectures~\cite{Vijay2015Majorana,Paetznick2023Performance}.
Majorana platforms are natural for measurement-based codes because local fermion-parity measurements are native operations~\cite{Roadmap2016Plugge,Karzig2017Scalable,Litinski2018Quantum,Viyuela2019Scalable,microsoft2025interferometric,Mudassar2026Fault}.
Recently, the Majorana XYZ subsystem code was introduced from a lattice of Majorana modes with an effective qubit description in terms of noncommuting $XYZ$ gauge generators~\cite{Busse2026Majorana}.
This raises the question of what logical structure emerges under periodic measurements and whether the resulting Floquet code can be realized locally both in qubits and directly in microscopic Majorana modes.


In this letter, we construct and exactly characterize one-, two-, and three-step measurement protocols for the Majorana XYZ subsystem code~\cite{Busse2026Majorana}, revealing a parity- and schedule-dependent coexistence of static and dynamical logical operators.
We consider an $L\times L$ lattice.
For the three-step protocol, odd $L$ supports one static logical qubit, while even $L$ supports two.
Both are static for $L=4n+2$, whereas for $L=4n$ one Pauli of the second logical qubit is static, but its conjugate cannot be chosen to be time independent. 
Reducing the period to two steps removes this obstruction and restores two static logical qubits. 
We further obtain a direct local realization in microscopic Majorana modes with one open spatial direction.
To our knowledge, this is the first Floquet-code family with both a local qubit representation and a direct microscopic Majorana realization.

\begin{figure}[t]
\centering
\includegraphics[width=0.8\columnwidth]{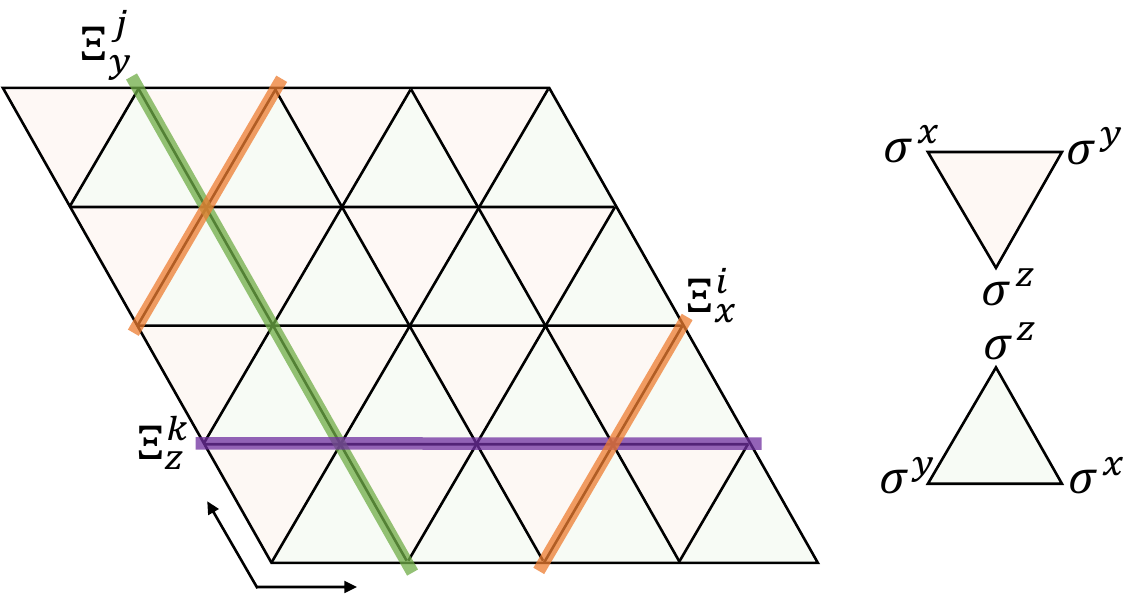}
\caption{
Static Majorana XYZ subsystem code.
The two inequivalent triangle operators on the right define the gauge generators $A_{\Delta}$ and $B_{\nabla}$.
The colored lines on the triangular lattice show representative noncontractible loop operators $\Xi_x$, $\Xi_y$, and $\Xi_z$.
Parallel loops differ by stabilizers and therefore provide equivalent logical representatives.
}
\label{fig:MajoranaXYZ}
\end{figure}

\textit{Static Majorana XYZ Code and Measurement Algebra}.---
We consider $N=L^2$ qubits on the periodic triangular lattice inherited from the Majorana construction. 
The Pauli ordering of the two inequivalent triangles and representative noncontractible loops are shown in Fig.~\ref{fig:MajoranaXYZ}. 
The gauge group is generated by
\begin{equation}
    A_{\Delta}=\sigma_i^x\sigma_j^y\sigma_k^z,\quad
    B_{\nabla}=\sigma_i^z\sigma_j^y\sigma_k^x,\quad
    \mathcal{G}=\langle A_{\Delta},B_{\nabla}\rangle .
\label{eq:gauge_group}
\end{equation}
For later use, we denote by $\Xi_{\alpha,a}^j$ a loop operator specified by three labels.
Here $\alpha=h,v,d$ denotes the line orientation (horizontal, vertical, or diagonal), $a=x,y,z$ the Pauli type, and $j$ the translated noncontractible line within that orientation.
The three loop families naturally associated with the static code are abbreviated as
\begin{equation}
\Xi_x^j\equiv\Xi_{d,x}^j,\qquad
\Xi_y^j\equiv\Xi_{v,y}^j,\qquad
\Xi_z^j\equiv\Xi_{h,z}^j ,
\label{eq:loop_notation}
\end{equation}
as illustrated in Fig.~\ref{fig:MajoranaXYZ}.
When the line index is unimportant, we write simply $\Xi_x$, $\Xi_y$, and $\Xi_z$.

\textit{Proposition 1 (Static Majorana XYZ code).}
For any $L$, the gauge group in Eq.~\eqref{eq:gauge_group} and its stabilizer subgroup $\mathcal{S}=Z(\mathcal{G})$ satisfy
\begin{equation}
    \operatorname{rank}\mathcal{G}=(2L-1)(L-1),\qquad
    \operatorname{rank}\mathcal{S}=3(L-1).
\label{eq:static_rank}
\end{equation}
Consequently, the static Majorana XYZ code has parameters $[[L^2,1,(L-1)(L-2),L]]$, where $[[n,k,r,d]]$ denotes a subsystem code with $n$ physical qubits, $k$ logical qubits, $r$ gauge qubits, and distance $d$.

The stabilizer group is generated by neighboring pairs of parallel loops, $D_x^i=\Xi_x^i\Xi_x^{i+1}$, $D_y^j=\Xi_y^j\Xi_y^{j+1}$, and $D_z^k=\Xi_z^k\Xi_z^{k+1}$.
For odd $L$, these give $3(L-1)$ independent generators.
For even $L$, they obey one additional global relation, $\prod_{i\,{\rm even}}D_x^i\prod_{j\,{\rm even}}D_y^j\prod_{k\,{\rm even}}D_z^k=I$, so only $3(L-1)-1$ double loops are independent.
The remaining stabilizer can be chosen as the three-loop operator
\begin{equation}
    H_L=\Xi_x^L\Xi_y^L\Xi_z^L .
\label{eq:HL}
\end{equation}
For odd $L$, $H_L$ is generated by the double loops, whereas for even $L$ it is independent of them.
Thus $\operatorname{rank}\mathcal{S}=3(L-1)$ in both cases, but the stabilizer generating structure depends on the parity of $L$.

A convenient choice of logical Pauli operators is
\begin{equation}
    \widetilde X=\Xi_x,\qquad
    \widetilde Z=\Xi_z ,
\label{eq:static_logical}
\end{equation}
where any intersecting pair of $x$- and $z$-type loops may be chosen.
Neighboring parallel loops differ by double-loop stabilizers, giving $L$ mutually disjoint representatives of each of these logical classes.
Thus any conjugate logical operator must intersect all of them and therefore has weight at least $L$.
Since a noncontractible loop itself has weight $L$, the distance is $d=L$.
The detailed rank counting, stabilizer relations, logical operators, and distance proof are given in the Supplemental Material (SM)~\cite{SM}.

We next introduce the commuting gauge operators used in the
measurement protocols below.
The product of a neighboring pair of triangle generators gives a four-qubit parallelogram of a single Pauli type,
$P_{a,p}=\prod_{r\in p}\sigma_r^a$ with $a=x,y,z$, where $p$ labels the corresponding parallelogram shown in Fig.~2(a).
For each Pauli type, all translated parallelograms mutually commute.
We denote the groups they generate by $\mathcal P_x,\mathcal P_y$, and $\mathcal P_z$.
Let $\mathcal D_a=\langle D_a^j\rangle$ denote the subgroup generated by the $a$-type double loops.

\textit{Proposition 2 (Measurement algebra).---}
For each $a=x,y,z$, $\mathcal P_a$ is an Abelian subgroup of
$\mathcal G$ with $\operatorname{rank}\mathcal P_a=(L-1)^2$,
and contains the $L-1$ independent same-type double loops $\mathcal D_a$.
For any cyclic permutation $(a,b,c)$ of $(x,y,z)$,
\begin{equation}
    \left\langle\mathcal P_b,\, \mathcal P_a\cap C(\mathcal P_b)\right\rangle=
    \begin{cases}
        \langle\mathcal P_b,\mathcal D_a\rangle,
            & L\ {\rm odd},\\[2pt]
        \langle\mathcal P_b,\mathcal D_a,\mathcal D_c\rangle,
            & L\ {\rm even}.
    \end{cases}
\label{eq:measurement_centralizer}
\end{equation}
The explicit generating sets, parity-dependent centralizer
calculation, and proof are given in the SM~\cite{SM}.

\begin{figure}[t]
    \centering
    \includegraphics[width=\columnwidth]{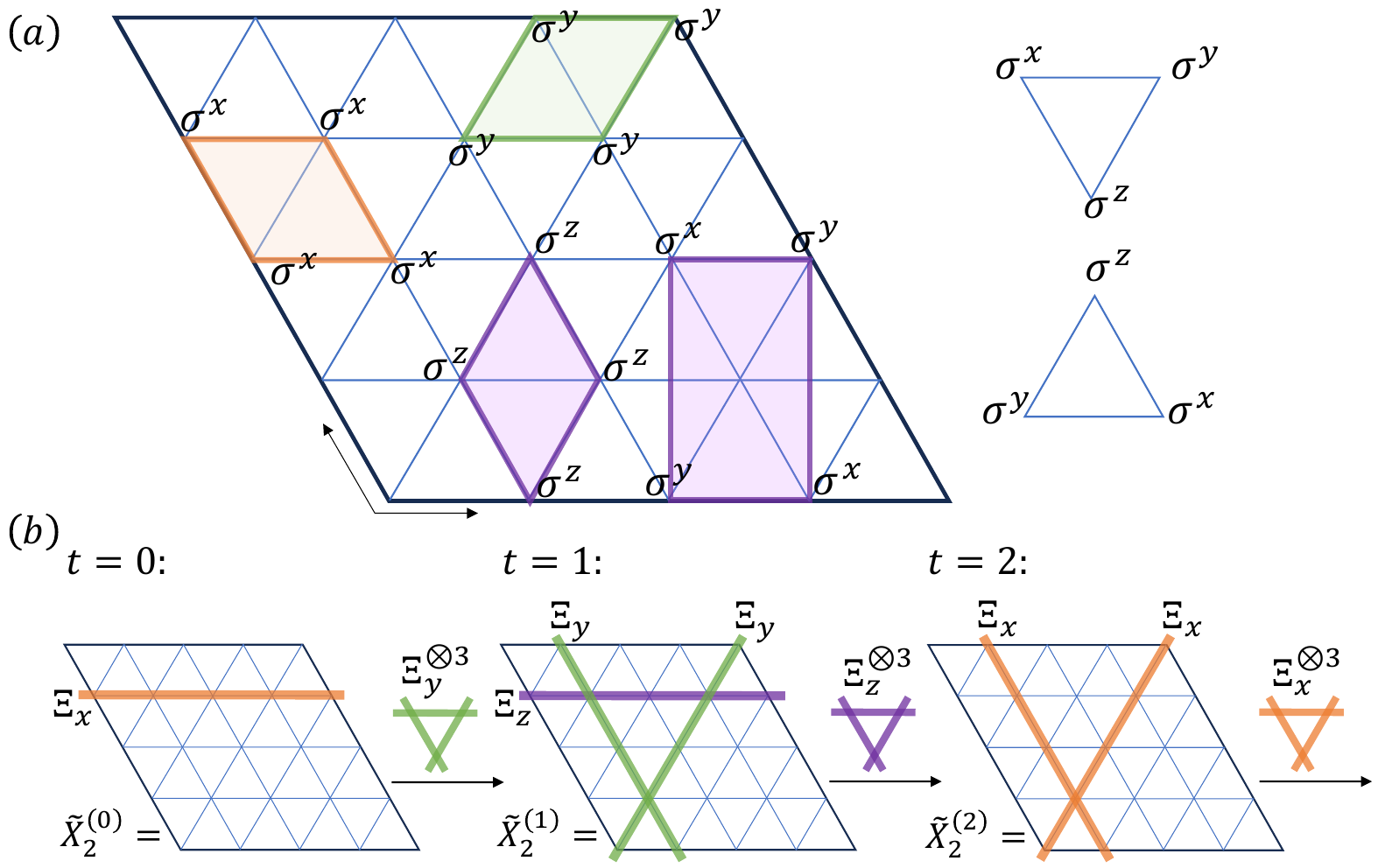}
    \caption{
    Three-step measurement protocol and dynamical logical operator.
    (a) The three families of four-qubit parallelogram operators, $\mathcal{P}_x$, $\mathcal{P}_y$, and $\mathcal{P}_z$, measured successively in the $T=3$ protocol.
    The Pauli labels indicate the corresponding single-Pauli-type operators on each parallelogram.
    (b) Evolution of the second logical Pauli representative for $L=4n$.
    Here $\Xi_a^{\otimes 3}$ denotes the shorthand $\Xi_{h,a}\Xi_{v,a}\Xi_{d,a}$.
    The logical operator $\widetilde Z_2=H_L$ remains time independent throughout the cycle, whereas its conjugate representative evolves as $\widetilde X_2^{(0)}\rightarrow\widetilde X_2^{(1)}\rightarrow\widetilde X_2^{(2)}$ under successive measurements of $\mathcal{P}_x$, $\mathcal{P}_y$, and $\mathcal{P}_z$.
    The three representatives describe the same encoded logical Pauli operator at the corresponding instantaneous stabilizer groups.
    }
    \label{fig:measurement}
\end{figure}

\textit{Three-Step Floquet Majorana XYZ Code}.---
We now promote the static subsystem code to a Floquet code by successively measuring the three parallelogram families shown in Fig.~\ref{fig:measurement}(a),
\begin{equation}
    t=0:\ \mathcal{P}_x,\qquad
    t=1:\ \mathcal{P}_y,\qquad
    t=2:\ \mathcal{P}_z,
\label{eq:T3_protocol}
\end{equation}
with period $T=3$.
Ignoring measurement-outcome signs, the instantaneous stabilizer group (ISG) updates as $\mathcal{S}_t=\langle \mathcal{P}_{a_t},\,\mathcal{S}_{t-1}\cap C(\mathcal{P}_{a_t})\rangle$~\cite{Hastings2021dynamically,Davydova2023Floquet}, where $C(\mathcal{P}_{a_t})$ denotes the Pauli centralizer of the newly measured family.

\textit{Theorem 1 (Three-step protocol).}
After the initial transient, the protocol in Eq.~\eqref{eq:T3_protocol} has distance $d=L$ and the following logical structure:
for odd $L$, it encodes one static logical qubit;
for $L=4n+2$, it encodes two static logical qubits;
and for $L=4n$, it encodes one static and one partially dynamical logical qubit.

The approach to the steady Floquet cycle depends crucially on the parity of $L$.
Let $\mathcal{D}_a=\langle D_a^j\rangle$ denote the subgroup generated by the $a$-type double loops.
Starting from a trivial ISG, the first measurement gives $\mathcal{S}_0=\mathcal{P}_x$ for either parity.

For odd $L$, the first cycle evolves as
\begin{equation}
    \mathcal{S}_0=\mathcal{P}_x,\quad
    \mathcal{S}_1=\langle\mathcal{P}_y,\mathcal{D}_x\rangle,\quad \mathcal{S}_2=\langle\mathcal{P}_z,\mathcal{D}_x,\mathcal{D}_y\rangle .
\label{eq:T3_ISG_odd}
\end{equation}
Accordingly, $\operatorname{rank}\mathcal{S}_0=(L-1)^2$ and $\operatorname{rank}\mathcal{S}_1=L^2-L$, while the third measurement brings the ISG to $\operatorname{rank}\mathcal{S}_2=L^2-1$.
From the next round onward, the three ISGs are related by cyclic permutation of $x,y,z$ and retain rank $L^2-1$, leaving a single logical qubit.
It is inherited from the static subsystem code and admits the time-independent logical pair $\widetilde X_1=\Xi_x$ and $\widetilde Z_1=\Xi_z$, which commutes with all three measurement families.
Thus, for odd $L$, the third measurement family is essential for reaching the one-logical-qubit Floquet code.

For even $L$, the second measurement behaves qualitatively differently.
The intersection $\mathcal{P}_x\cap C(\mathcal{P}_y)$ is enlarged by the even-$L$ global relations, and an equivalent generating set for the first-cycle ISGs is
\begin{equation}
    \mathcal{S}_0=\mathcal{P}_x,\quad \mathcal{S}_1=\langle\mathcal{P}_y,\mathcal{D}_x,\mathcal{D}_z\rangle,\quad \mathcal{S}_2=\langle\mathcal{P}_z,\mathcal{D}_x,\mathcal{D}_y\rangle .
\label{eq:T3_ISG_even}
\end{equation}
Because the three double-loop families obey one additional global relation, already $\operatorname{rank}\mathcal{S}_1=L^2-2$; the subsequent $\mathcal{P}_z$ measurement changes the generating set but not the rank.
Hence, the two-logical-qubit structure is reached after the first two measurements and persists throughout the steady cycle.
The first logical qubit may again be chosen as $\widetilde X_1=\Xi_x$ and $\widetilde Z_1=\Xi_z$.
The second arises from the static stabilizer $H_L$, which is absent from the Floquet ISG but commutes with it, and we choose $\widetilde Z_2=H_L$.
The explicit centralizer calculations and independent ISG generating sets are given in the SM~\cite{SM}. 

The nature of the conjugate $\widetilde X_2$ depends on $L$ modulo four.
For $L=4n+2$, a time-independent choice is
\begin{equation}
\widetilde X_2=\Omega_x\equiv\prod_{i\,{\rm even}}\Xi_{h,x}^i,\qquad
\widetilde Z_2=H_L .
\label{eq:T3_static_second_logical}
\end{equation}
The operator $\Omega_x$ commutes with all three measurement families and satisfies
$\{\Omega_x,H_L\}=0$.
Hence both qubits admit time-independent logical Pauli pairs for $L=4n+2$.

For $L=4n$, by contrast, no time-independent conjugate of $H_L$ exists.
Let $\mathcal{M}_{\rm cyc}=\langle\mathcal{P}_x,\mathcal{P}_y,\mathcal{P}_z\rangle$ denote the group generated by all measurements in one period.
Since $H_L\in\mathcal{M}_{\rm cyc}$, any operator commuting with every measurement must also commute with $H_L$ and therefore cannot serve as its conjugate logical Pauli.
Nevertheless, a valid conjugate exists at every measurement round.
Defining $F_a=\Xi_{h,a}\Xi_{v,a}\Xi_{d,a}\in\mathcal{P}_a$, the physical representative evolves as
\begin{equation}
    \widetilde X_2^{(0)}\xrightarrow{\;F_y\;}\widetilde X_2^{(1)}\xrightarrow{\;F_z\;}\widetilde X_2^{(2)}\xrightarrow{\;F_x\;}\widetilde X_2^{(0)} .
\label{eq:X2_orbit}
\end{equation}
A convenient choice is $\widetilde X_2^{(0)}=\Xi_{h,x}$, $\widetilde X_2^{(1)}=\Xi_{h,z}\Xi_{v,y}\Xi_{d,y}$, and $\widetilde X_2^{(2)}=\Xi_{v,x}\Xi_{d,x}$, up to irrelevant Pauli phases.
Each representative commutes with the corresponding ISG and anticommutes with the fixed logical operator $\widetilde Z_2=H_L$.
Thus successive measurements necessarily update the representative of the same logical Pauli, as illustrated in Fig.~\ref{fig:measurement}(b).
We refer to this asymmetric structure, with one logical Pauli remaining static while its conjugate necessarily evolves, as a partially dynamical logical qubit.
Related static and dynamical logical operators occur in the honeycomb and ladder Floquet codes~\cite{Hastings2021dynamically}. 
Here, the obstruction to a static conjugate depends on $L\bmod 4$, occurring for $L=4n$ but not for $L=4n+2$.
The proof that $H_L\in\mathcal{M}_{\rm cyc}$ and the detailed measurement identities underlying Eq.~\eqref{eq:X2_orbit} are given in the SM~\cite{SM}.

Although the operators $H_L$ and $\Omega_x$ make the static and dynamical character of the logical qubits transparent, simpler representatives can be chosen independently at each measurement round.
For even $L$, a convenient round-adapted logical basis is
\begin{equation}
\begin{split}
(\widehat X_1,\widehat Z_1)
&:\ (\Xi_{v,y},\Xi_{h,z})
\rightarrow(\Xi_{h,z},\Xi_{d,x})
\rightarrow(\Xi_{d,x},\Xi_{v,y}),\\
(\widehat X_2,\widehat Z_2)
&:\ (\Xi_{v,z},\Xi_{h,y})
\rightarrow(\Xi_{h,x},\Xi_{d,z})
\rightarrow(\Xi_{d,y},\Xi_{v,x}),
\end{split}
\label{eq:T3_distance_logicals}
\end{equation}
where the three entries correspond to the $\mathcal{P}_x$, $\mathcal{P}_y$, and $\mathcal{P}_z$ rounds, respectively.
At each round, these operators form a complete basis for the two encoded qubits and all have weight $L$~\footnote{The round-adapted basis is chosen independently at each measurement round and is used primarily for the distance argument. Its individual entries therefore need not represent the same logical Pauli class across different rounds, and the corresponding logical frame can mix the two encoded qubits. The relation to the continuously tracked logical operators is given explicitly in the SM~\cite{SM}.}.
Following the same disjoint-representative argument as for the static code gives $d=L$ throughout the steady $T=3$ cycle~\footnote{Here and below, $d$ denotes the instantaneous code distance at each steady round. The spacetime fault distance of the full measurement protocol~\cite{Fu2025error,blackwell2025code} is not addressed here.}.
The detailed equivalence and distance proof are given in the SM~\cite{SM}.

\textit{Two- and One-Step Measurement Protocols}.---
The parity-dependent ISG evolution above suggests a natural shortening of the measurement cycle.
For odd $L$, after the first two measurements the ISG has rank only $L^2-L$, and the third family $\mathcal{P}_z$ is required to reach the rank-$(L^2-1)$ one-logical-qubit code.
For even $L$, by contrast, the rank-$(L^2-2)$ two-logical-qubit structure is already reached after measuring $\mathcal{P}_x$ and $\mathcal{P}_y$.
This motivates closing the cycle after the second measurement.

\textit{Proposition 3 (Two-step protocol).}
For even $L$, consider the measurement cycle
\begin{equation}
t=0:\ \mathcal{P}_x,\qquad
t=1:\ \mathcal{P}_y,
\label{eq:T2_protocol}
\end{equation}
repeated with period $T=2$.
After the initial transient, it encodes two static logical qubits with distance $d=L$.

Starting from $\mathcal S_0=\mathcal P_x$, the next $\mathcal P_y$ measurement produces $\mathcal S_1=\langle\mathcal P_y,\mathcal D_x,\mathcal D_z\rangle$ of rank $L^2-2$, and the following $\mathcal P_x$ round gives the cyclic partner $\langle\mathcal P_x,\mathcal D_y,\mathcal D_z\rangle$.
These two ISGs then alternate periodically.
The two steady ISGs preserve the logical qubit inherited from the static subsystem code, so the first logical pair may again be chosen as
$\widetilde X_1=\Xi_x$ and $\widetilde Z_1=\Xi_z$.
For the second logical qubit, we choose $\widetilde Z_2=H_L$ and
\begin{equation}
    \widetilde X_2=
    \begin{cases}
    \Omega_x, & L=4n+2,\\
    \Xi_{h,x}, & L=4n,
    \end{cases}
\label{eq:T2_logicals}
\end{equation}
where $\Omega_x=\prod_{i\,{\rm even}}\Xi_{h,x}^i$.
In the corresponding size class, $\widetilde X_2$ commutes with both measurement families and anticommutes with $H_L$, so both logical qubits admit time-independent Pauli representatives.

The $L=4n$ case makes the role of the measurement schedule particularly transparent.
In the three-step protocol, inclusion of $\mathcal{P}_z$ gives $H_L\in\langle\mathcal{P}_x,\mathcal{P}_y,\mathcal{P}_z\rangle$ and obstructs any time-independent conjugate of $H_L$.
For the two-step cycle, $H_L\notin\langle\mathcal{P}_x,\mathcal{P}_y\rangle$, and the static choice $\widetilde X_2=\Xi_{h,x}$ becomes possible.
Thus, changing the measurement schedule converts the same second logical qubit from partially dynamical at $T=3$ to fully static at $T=2$. 
Together with the $L\bmod 4$ dependence, this shows that its static or dynamical character is controlled by the measurement schedule and global lattice relations, rather than fixed by the underlying subsystem code.
Following the same disjoint-representative argument as above gives $d=L$.
The explicit ISG derivation and distance proof are given in the SM~\cite{SM}.

We finally consider a one-step protocol.
Rather than repeatedly measuring one parallelogram family, we use a different commuting family obtained from neighboring products of triangle gauge generators.
The resulting four-qubit rectangle operator, shown in Fig.~\ref{fig:measurement}, has the form $R_p=\sigma_i^x\sigma_j^y\sigma_k^x\sigma_l^y$.
All translated rectangle operators mutually commute, so the full family can be measured in a single round and repeated with period $T=1$.

\begin{table}[t]
\caption{
Logical structure of the one-, two-, and three-step protocols.
Each entry gives $(k_{\rm s},k_{\rm p};d)$.
A dash denotes the odd-$L$ two-step case not considered as part of this code family.
}
\label{tab:protocol_summary}
\begin{ruledtabular}
\begin{tabular}{c c c c}
System size & $T=1$ & $T=2$ & $T=3$ \\
\hline
$L$ odd
& $(1,0;L)$
& ---
& $(1,0;L)$ \\
$L=4n+2$
& $(4,0;L/2)$
& $(2,0;L)$
& $(2,0;L)$ \\
$L=4n$
& $(4,0;L/2)$
& $(2,0;L)$
& $(1,1;L)$
\end{tabular}
\end{ruledtabular}
\end{table}

\textit{Proposition 4 (One-step protocol).}
For odd $L$, the $T=1$ rectangle code encodes one static logical qubit with distance $d=L$.
For even $L$, it encodes four static logical qubits with distance $d=L/2$.

For odd $L$, the $L^2$ rectangle stabilizers obey one global relation, giving $\operatorname{rank}\mathcal{S}=L^2-1$ and hence one logical qubit.
A convenient logical pair is $\widetilde X=\Xi_x$ and $\widetilde Z=\Xi_y$.
The resulting stabilizer code is a twisted Wen-plaquette realization~\cite{Wen2003Quantum,Yu2007Exact,Yu2013Majorana}, and the same disjoint-representative argument gives $d=L$.

For even $L$, the rectangle operators separate into two independent sublattices, each realizing a Wen-plaquette sector with two independent global relations.
Thus the full stabilizer group has rank $L^2-4$ and encodes four logical qubits.
Within each sublattice, the noncontractible loop operators split into their even- and odd-line restrictions, which we denote by $\Xi_x^{\rm even},\Xi_x^{\rm odd},\Xi_y^{\rm even},\Xi_y^{\rm odd}$, giving two logical qubits of that sector.
Each restricted loop has weight $L/2$, and the same disjoint-representative argument gives $d=L/2$.
The explicit global relations and logical Pauli pairs are given in the SM~\cite{SM}.

The logical structures of the three measurement protocols are summarized in Table~\ref{tab:protocol_summary}.
We characterize each code by $(k_{\rm s},k_{\rm p};d)$, where $k_{\rm s}$ and $k_{\rm p}$ denote the numbers of static and partially dynamical logical qubits, respectively, and $d$ is the code distance.

\begin{figure}[t]
    \centering
    \includegraphics[width=0.88\columnwidth]{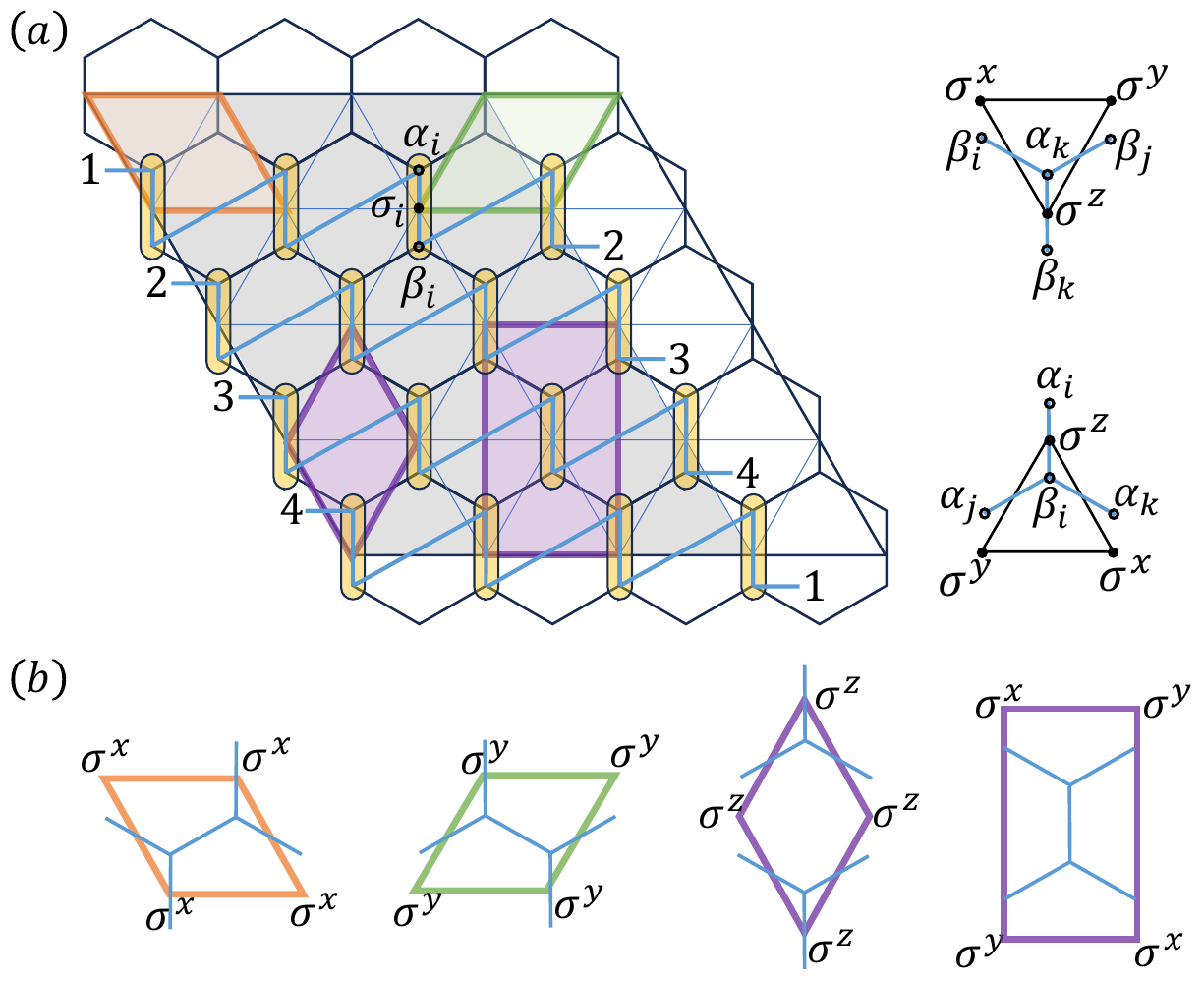}
    \caption{
    Local Majorana realization with one open spatial direction.
    (a) Honeycomb-lattice Majorana modes ordered along the indicated path.
    Each yellow region denotes a unit cell containing $\alpha_i$ and $\beta_i$, related to the effective spin $\sigma_i$ through Eq.~\eqref{eq:Majorana_spin_mapping}.
    The two retained triangle generators admit local four-Majorana representations, shown on the right.
    (b) Spin representations of the $x$-, $y$-, and $z$-type parallelogram measurements and the rectangle measurement.
    Under the Majorana mapping, the $x$- and $y$-type parallelograms and the rectangle are local four-Majorana parity operators, whereas the $z$-type one is an eight-Majorana parity operator.}
    \label{fig:Majorana_rep}
\end{figure}

\textit{Open-Boundary Majorana Realization}.---
So far we have formulated the code in terms of effective spins on the triangular lattice.
We now show that the same construction admits a direct local realization in Majorana modes when one spatial direction is opened.
As illustrated in Fig.~\ref{fig:Majorana_rep}, each unit cell of the honeycomb lattice contains two Majorana modes, $\alpha_i$ and $\beta_i$, related to the effective spin through the ordering-dependent transformation~\cite{Li2018Majorana}
\begin{equation}
    \alpha_i=\sigma_i^x\prod_{k<i}(-\sigma_k^z),\qquad
    \beta_i=\sigma_i^y\prod_{k<i}(-\sigma_k^z).
\label{eq:Majorana_spin_mapping}
\end{equation}
For the two triangle generators in Eq.~\eqref{eq:gauge_group}, the Jordan--Wigner strings cancel locally, giving
\begin{equation}
    A_{\Delta}=\beta_i\alpha_i\alpha_j\alpha_k, \qquad B_{\nabla}=\alpha_k\beta_i\beta_j\beta_k ,
\label{eq:Majorana_triangle}
\end{equation}
with the vertex convention shown in Fig.~\ref{fig:Majorana_rep}(a).
Thus each retained triangle gauge generator has a local four-Majorana representation.

On a torus, generators crossing the ordering seam generally acquire nonlocal strings.
We therefore open one direction and use the ordering shown in Fig.~\ref{fig:Majorana_rep}.
The cylindrical geometry retains $L(L-1)$ generators of each triangle type, all with local four-Majorana representations.
The omitted boundary-crossing generators are generated by products of the retained ones, so the local generating set reproduces the same gauge group, and hence the same subsystem-code structure, as the periodic spin construction.
Opening the boundary changes the microscopic realization without changing the code algebra.



Hence, suitable Floquet protocols admit both local qubit and direct microscopic Majorana realizations.
Unlike previous Majorana implementations based on tetron-encoded qubits~\cite{Hastings2021dynamically,Paetznick2023Performance}, here the Majorana modes themselves constitute the microscopic degrees of freedom.
As shown in Fig.~\ref{fig:Majorana_rep}(b), local $\mathcal P_x$ and $\mathcal P_y$ parallelograms are four-Majorana parity measurements, whereas $\mathcal P_z$ requires eight Majoranas.
For even $L$, the $T=3$ protocol remains locally realizable because the retained local $\mathcal P_y$ measurements, together with the $y$-type double loops from preceding rounds, reproduce the full $\mathcal P_y$-round ISG. 
The $T=2$ protocol can similarly be realized using local $\mathcal P_x$ and $\mathcal P_z$ measurements~\footnote{For odd $L$, the truncated local $\mathcal P_y$ family does not reproduce the periodic $T=3$ ISG because the missing $y$-type double-loop sector is not inherited from the preceding $\mathcal P_z$ and $\mathcal P_x$ rounds. Likewise, simply truncating the rectangle family does not reproduce the periodic $T=1$ code.}.
Details of the Majorana mapping and boundary ISGs are given in the SM~\cite{SM}.

\textit{Summary and Outlook}.---
We have constructed one-, two-, and three-step Floquet codes from the Majorana XYZ subsystem code, revealing parity- and schedule-dependent logical dynamics.
For $L=4n+2$, both logical qubits of the three-step protocol are static, whereas for $L=4n$ one Pauli of the second logical qubit remains static while its conjugate necessarily evolves.
Reducing the protocol to $T=2$ removes this obstruction and restores a static logical pair.
We further establish direct local Majorana realizations of the even-$L$ two- and three-step protocols with one open spatial direction, using local four- and eight-Majorana parity measurements.
To our knowledge, this is the first Floquet-code family with both a local qubit representation and a direct microscopic Majorana realization.

Several questions now become natural.
Most importantly, the fault-tolerant properties of these protocols remain to be established under realistic data and measurement errors, including decoding thresholds for the time-dependent ISGs and logical representatives~\cite{Gidney2021fault,Setiawan2025Tailoring,Fahimniya2025fault,Tang2025Phases,Derks2026dynamical}.
A second direction is to exploit the measurement schedule itself as a logical-control resource: switching between the one-, two-, and three-step cycles may provide a way to create, remove, or manipulate dynamical logical degrees of freedom~\cite{Aasen2022Adiabatic,Davydova2024quantum,Kobayashi2024Cross,Sun2025Logical,Alam2025Dynamical}. 
It would also be interesting to construct suitable boundary checks for the open geometry, potentially broadening the direct Majorana realization and facilitating implementations in experimentally motivated architectures~\cite{Ivanov2001Non,Fu2008Superconducting,Wang2018Evidence,Machida2019Zero,Liu2019Protocol,Li2022Ordered,vuillot2021planar,Haah2022boundaries}.

\textit{Acknowledgments}.---This work is supported in part by the NSFC under Grant Nos.~12347107 and 12334003 (X.S. and H.Y.), and the New Cornerstone Science Foundation through the Xplorer Prize (H.Y.).

\textit{Note added}. --- Close to this manuscript being finalized, a related work~\cite{grover2026dynamical} appeared, which independently studied the chiral XYZ triangle-operator algebra and its subsystem symmetries. 


\let\oldaddcontentsline\addcontentsline
\renewcommand{\addcontentsline}[3]{}
\bibliography{refs.bib}
\let\addcontentsline\oldaddcontentsline

\clearpage

\setcounter{section}{0}
\setcounter{subsection}{0}
\setcounter{secnumdepth}{2}
\setcounter{tocdepth}{2}

\renewcommand{\thesection}{\Roman{section}}
\renewcommand{\thesubsection}{\Alph{subsection}}
\renewcommand{\theHsection}{SM.\arabic{section}}
\renewcommand{\theHsubsection}{SM.\arabic{section}.\arabic{subsection}}

\setcounter{table}{0}
\renewcommand{\thetable}{S\arabic{table}}
\renewcommand{\theHtable}{S.\arabic{table}}

\setcounter{figure}{0}
\renewcommand{\thefigure}{S\arabic{figure}}
\renewcommand{\theHfigure}{S.\arabic{figure}}

\setcounter{equation}{0}
\renewcommand{\theequation}{S\arabic{equation}}
\renewcommand{\theHequation}{S.\arabic{equation}}

\begin{widetext}

\begin{center}
\textbf{Supplemental Material for ``Floquet Majorana XYZ Codes with Tunable Logical Dynamics''}
\end{center}

This Supplemental Material provides the detailed derivations supporting the main text.
We establish the static gauge and stabilizer structure, the parallelogram measurement algebra, and the parity-dependent instantaneous stabilizer groups for the three-, two-, and one-step protocols.
We also give the proof of the partially dynamical logical qubit for $L=4n$, the corresponding logical-operator evolution and distance, and the open-boundary Majorana realization.

\tableofcontents

\section{Static Majorana XYZ subsystem code}
\label{sec:SMstatic}

As defined in the main text, the gauge group is generated by the two oriented triangle operators
\begin{equation}
A_{\Delta}=\sigma_i^x\sigma_j^y\sigma_k^z,\qquad
B_{\nabla}=\sigma_i^z\sigma_j^y\sigma_k^x,\qquad
\mathcal G=\langle A_{\Delta},B_{\nabla}\rangle ,
\label{eq:SMtriangles}
\end{equation}
with the vertex convention of Fig.~1 in the main text.
Its stabilizer group is the center $\mathcal S=Z(\mathcal G)$.

\begin{figure}[t]
    \centering
    \includegraphics[width=0.4\columnwidth]{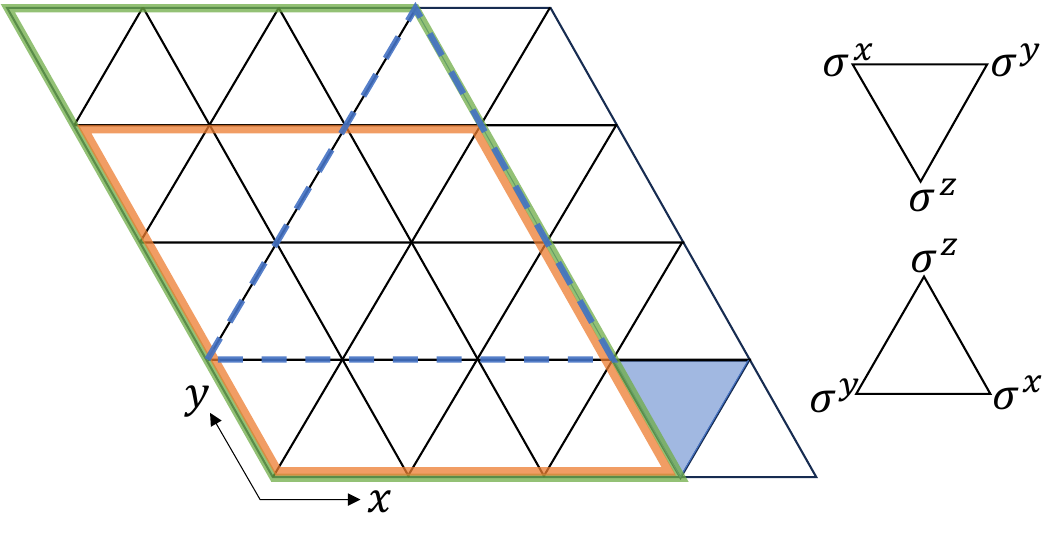}
    \caption{
    Construction used in the gauge-rank proof.
    The arrows define the lattice coordinates $(m,n)$ used in Eq.~\eqref{eq:SMABindex}.
    Products of triangles of the same type along a complete straight column reduce to the same double-loop operator, giving Eq.~\eqref{eq:SMcolumnidentity}.
    The dashed contour illustrates the triangular product in Eq.~\eqref{eq:SMseedtriangle}, whose interior factors cancel and leave the shaded triangle, providing the seed for constructing the final column.
    }
    \label{figS:static_rank}
\end{figure}

\subsection{Rank of the gauge group}
\label{sec:SMrankG}

We first prove the gauge-group rank appearing in Proposition~1 of the main text.

\begin{proposition}[Gauge-group basis]
\label{prop:SMgauge_rank}
For the $L\times L$ periodic Majorana XYZ code, the set
\begin{equation}
    \mathcal B=\{A_{m,n}:0\le m\le L-2,\ 0\le n\le L-1\}\cup\{B_{m,n}:0\le m,n\le L-2\}
\label{eq:SMgauge_basis}
\end{equation}
is an independent generating set of the gauge group $\mathcal G$.
Consequently,
\begin{equation}
\operatorname{rank}\mathcal G=|\mathcal B|=L(L-1)+(L-1)^2=(2L-1)(L-1).
\label{eq:SMrankG}
\end{equation}
\end{proposition}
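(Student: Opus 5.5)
The plan is to prove two things about $\mathcal B$. First, every generator $A_{m,n}$ and $B_{m,n}$ of $\mathcal G$ that is not in $\mathcal B$ can be written as a product of elements of $\mathcal B$. Second, no nontrivial product of elements of $\mathcal B$ equals the identity up to phase. Both steps use the binary symplectic representation: each triangle is a vector in $\mathbb F_2^{2L^2}$, and products become sums. The rank is then $|\mathcal B|=L(L-1)+(L-1)^2$.

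\emph{Generation.} The triangles missing from $\mathcal B$ are of two kinds:
\begin{itemize}
\item[(i)] the $A$-triangles in the final column $m=L-1$;
\item[(ii)] the $B$-triangles with $m=L-1$ or $n=L-1$.
\end{itemize}
I will use the column identity of Fig.~S1 (Eq.~(S4)). On the torus, the product of all $A$-triangles along a complete straight column equals a double loop $\Xi\Xi'$. The product of all $B$-triangles along the same column equals the same double loop. Hence
\begin{equation}
\prod_{n}A_{m,n}=\prod_n B_{m,n}
\end{equation}
for each column $m$, and the analogous relation holds along rows. These relations let me trade one missing triangle per line for products of retained ones.

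Concretely, I will first recover the $B$-triangles in row $n=L-1$. In each column $m\le L-2$, the column relation expresses $B_{m,L-1}$ through $\prod_n A_{m,n}$ and the retained $B_{m,n}$ with $n\le L-2$. The remaining triangles, those in column $m=L-1$, cannot be handled this way, because that column contains no retained triangles to start from. For these I will use the triangular product of Eq.~(S5) (the dashed contour). This is a product over a large triangular region of retained $A$ and $B$ triangles whose interior factors cancel pairwise, leaving a single shaded triangle in the last column. That triangle serves as the seed. Combining the seed with the row relations, I then generate the rest of the last column one triangle at a time, moving along the column.

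\emph{Independence.} Suppose $\prod_{g\in S}g\propto I$ for some nonempty $S\subseteq\mathcal B$. I will show $S$ is empty by a peeling argument that finds a qubit touched by exactly one element of $S$.
\begin{itemize}
\item Order the triangles lexicographically by $(m,n)$ and choose an extremal triangle in $S$, for example one with maximal $m$.
\item Because the last column of $A$'s and the last row and column of $B$'s are excluded, there is a vertex of this triangle where no other candidate in $S$ acts with the same or compensating Pauli. At that vertex the Pauli types $x,y,z$ placed by $A$ and $B$ differ, so they cannot cancel.
\item That qubit then carries a single nontrivial Pauli in the product, a contradiction. Removing the triangle and repeating shows $S=\emptyset$.
\end{itemize}

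As a cross-check, I will verify the count a second way. $\mathcal G$ has $2L^2$ generators in total. The relations among them should number exactly $2L^2-(2L-1)(L-1)=3L-1$: the column and row identities together with the global relations. I will compare this against Proposition~1's $\operatorname{rank}\mathcal S=3(L-1)$ via the identity $\operatorname{rank}\mathcal G=\operatorname{rank}\mathcal S+2r$ with $r=(L-1)(L-2)$.

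The main obstacle I expect is the independence step near the seam $m=L-2$. There the retained $A$-column $m=L-2$ borders the excluded column, so periodic wraparound can make the extremal vertex shared with triangles at $m=0$. The peeling order must be chosen carefully, possibly sweeping in $n$ within the last retained column, to guarantee an uncancelled Pauli. If the direct argument becomes unwieldy, I will instead use the relation count above: show that the span has codimension exactly $3L-1$ by exhibiting that many independent relations and proving there are no others by a dimension bound.
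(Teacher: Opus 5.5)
\textbf{Verdict: there is a genuine gap in the generation step, and the independence sketch rests on a false reason, though it can be repaired.}

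Your route is the paper's. You use the column identities to recover $B_{m,L-1}$ for $m\le L-2$, the triangular seed identity \eqref{eq:SMseedtriangle} for the last column, and a peeling argument for independence.

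\textbf{The gap: finishing the last column.} A row relation $\prod_m A_{m,n}=\prod_m B_{m,n}$ contains only the two unknowns $A_{L-1,n}$ and $B_{L-1,n}$ of that row. From the single seed $B_{L-1,0}$ it gives $A_{L-1,0}$ and nothing more; it cannot carry you from row $n$ to row $n+1$. The count confirms this: the last column has $2L$ unknown triangles, while the rows plus one seed supply only $L+1$ constraints.

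What is missing is the set of $y$-translates of the seed identity, which the paper uses. Shift \eqref{eq:SMseedtriangle} by $k$ in $n$. The right-hand side is then supported entirely in columns $m\le L-2$, all already generated (including the recovered $B_{m,L-1}$). So it yields every $B_{L-1,k}$ directly, and the row relations then give $A_{L-1,k}$.

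Alternatively, add the diagonal relations $\prod_{m-n=\delta}A_{m,n}=\prod_{m-n=\delta+1}B_{m,n}$, which link $A_{L-1,k}$ to $B_{L-1,k-1}$. Rows and diagonals together form a single cycle through the $2L$ unknowns, and the seed fixes the one remaining freedom.

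\textbf{Independence: the stated reason is wrong, but the sketch can be repaired.} It is not true that distinct Pauli types cannot cancel. At the site $(m+1,k)$, the $X$ from $A_{m,k}$, the $Z$ from $A_{m,k-1}$ and the $Y$ from $B_{m,k-1}$ multiply to the identity. This is exactly how the full column relation vanishes at every site. So the argument must use the excluded $B_{m,L-1}$, not merely that the Paulis differ.

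Taking the lexicographically maximal $(m,n)$ in $S$ does this:
\begin{itemize}
\item If $A_{m,L-1}\in S$, the site $(m+1,0)$ receives at most the $Z$ from $A_{m,L-1}$ and the $X$ from $A_{m,0}$. The $Y$ would have come from the excluded $B_{m,L-1}$, so the product cannot be trivial.
\item Otherwise $n\le L-2$. At $(m+1,n+1)$ the $X$ from $A_{m,n+1}$ is absent by maximality, so only $Z$ and/or $Y$ survive.
\end{itemize}
This matches the paper's start at the site $(L-1,0)$ followed by propagation along the column.

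The wraparound you anticipate in the $m$ direction never occurs, because sites in column $m+1$ see only triangles from columns $m$ and $m+1$. The periodicity that actually matters is in $n$.

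\textbf{The cross-check is circular.} The paper derives $\operatorname{rank}\mathcal S=3(L-1)$ and $r=(L-1)(L-2)$ from $\operatorname{rank}\mathcal G$, so they cannot confirm it. Your relation count is also incomplete. Rows and columns supply only $2L-1$ independent relations. The remaining $L$ are $L-1$ independent diagonal relations plus the single triangular seed relation.
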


\begin{proof}
We use the lattice convention shown in Fig.~\ref{figS:static_rank}.
Let $(m,n)\in\mathbb Z_L^2$ label the lattice sites, with $m$ increasing along the $x$ direction and $n$ along the $y$ direction indicated by the arrows.
With this convention,
\begin{equation}
    A_{m,n}=Y_{m,n}X_{m+1,n}Z_{m+1,n+1},\qquad B_{m,n}=Z_{m,n}X_{m,n+1}Y_{m+1,n+1},
\label{eq:SMABindex}
\end{equation}
where all indices are understood modulo $L$.

We first show that $\mathcal B$ generates $\mathcal G$.
For each fixed $m$, multiplying all triangles of either type in one column gives the same double-loop operator,
\begin{equation}
    \prod_{n=0}^{L-1}A_{m,n}=\prod_{n=0}^{L-1}B_{m,n}=\Xi_y^m\Xi_y^{m+1}.
\label{eq:SMcolumnidentity}
\end{equation}
Therefore, for $m=0,\ldots,L-2$, the missing $B$ triangle is generated by the other $L$ up-triangles and $L-1$ down-triangles in the same column.
Hence all $A$- and $B$-type triangles in the first $L-1$ columns are generated by $\mathcal B$.

It remains to generate the final column.
The triangular product indicated by the dashed contour in Fig.~\ref{figS:static_rank} gives
\begin{equation}
    B_{L-1,0}=\left[\prod_{m=1}^{L-2}\prod_{n=1}^{m}B_{m,n}\right]\left[\prod_{m=0}^{L-2}\prod_{n=1}^{m+1}A_{m,n}\right],
\label{eq:SMseedtriangle}
\end{equation}
up to an irrelevant overall phase.
All operators on the right-hand side have already been generated, so Eq.~\eqref{eq:SMseedtriangle} gives the first triangle in the final column.
The remaining triangles in that column follow successively by translation along the $y$ direction together with $\prod_{m=0}^{L-1}A_{m,n}=\prod_{m=0}^{L-1}B_{m,n}$.
Therefore $\mathcal B$ generates the full gauge group $\mathcal G$.

We next prove independence.
Suppose
\begin{equation}
    \prod_{m,n}A_{m,n}^{a_{m,n}}B_{m,n}^{b_{m,n}}=I,\qquad a_{m,n},b_{m,n}\in\mathbb F_2,
\label{eq:SMbasisrelation}
\end{equation}
is a relation involving only generators in $\mathcal B$.
Consider first the boundary site $(L-1,L)$.
Since $B_{L-2,L-1}$ is absent from $\mathcal B$, cancellation of the Pauli operators at this site requires $a_{L-2,L-1}=a_{L-2,0}=0$.
Moving successively along the same boundary $(L-1,n)$ then forces $a_{L-2,n}=b_{L-2,n}=0$ for all $n$, so the entire column $m=L-2$ is absent from the relation.
The same argument can then be applied successively to the newly exposed columns $m=L-3,L-4,\ldots,0$, yielding $a_{m,n}=b_{m,n}=0$ for every generator in $\mathcal B$.
Hence, $\mathcal B$ is independent.

Therefore, $\mathcal B$ is an independent generating set of $\mathcal G$, proving Eq.~\eqref{eq:SMrankG}.
\end{proof}

\subsection{Stabilizer group and parity dependence}
\label{sec:SMstabilizer}

We next determine the stabilizer subgroup appearing in Proposition~1 of the main text.

\begin{proposition}[Stabilizer group]
\label{prop:SMstabilizer}
The stabilizer group $\mathcal S=Z(\mathcal G)$ has
\begin{equation}
    \operatorname{rank}\mathcal S=3(L-1).
\label{eq:SMrankS}
\end{equation}
For odd $L$, it is generated by $L-1$ independent double loops of each Pauli type.
For even $L$, the double loops satisfy one additional relation and have rank $3(L-1)-1$; an additional independent stabilizer may be chosen as $H_L=\Xi_x^L\Xi_y^L\Xi_z^L$.
For odd $L$, $H_L$ is generated by the double loops, whereas for even $L$ it is independent of them.
\end{proposition}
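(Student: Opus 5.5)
The plan is to compute $\mathcal S=Z(\mathcal G)$ as $\mathcal G\cap C(\mathcal G)$, using the symplectic description of Pauli groups over $\mathbb F_2$ and the double-commutant identity $C(C(\mathcal G))=\mathcal G$ (up to phases). This gives $\mathcal S=C(\mathcal G)\cap C(C(\mathcal G))$, the center of $C(\mathcal G)$. From the gauge-group rank in Proposition~\ref{prop:SMgauge_rank},
\begin{equation}
\operatorname{rank}C(\mathcal G)=2L^2-(2L-1)(L-1)=3L-1 .
\end{equation}
So the first task is to find an explicit generating set of $C(\mathcal G)$. Then $\mathcal S$ is the subgroup of elements commuting with every generator.

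\textbf{Step 1: the loops generate $C(\mathcal G)$.} First I will check locally that every loop $\Xi_x^i,\Xi_y^j,\Xi_z^k$ commutes with every triangle $A_{m,n}$ and $B_{m,n}$. A triangle meets a line of a given orientation in zero or two sites, and the Pauli ordering in Eq.~\eqref{eq:SMABindex} makes the two local anticommutations cancel. Next I will determine the relations among the $3L$ loops. The candidate is the single global relation $\prod_i\Xi_x^i\prod_j\Xi_y^j\prod_k\Xi_z^k\propto I$: every site lies on exactly one line of each orientation, and $XYZ\propto I$ on that site. Showing there are no further relations means the loops span a group of rank $3L-1$. I would prove this by restricting a putative relation to a single row and one diagonal, where the Pauli types at each site separate the exponents. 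Since this rank equals $\operatorname{rank}C(\mathcal G)$, the loops then generate $C(\mathcal G)$. I expect this step to be the main obstacle, mainly in getting the geometric conventions and the independence argument right; the dimension count does the rest.

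\textbf{Step 2: center of the loop group.} Two loops of different Pauli types lie on different orientations, so they cross at exactly one site with different Paulis and anticommute. Loops of the same type are parallel and disjoint, so they commute. Hence $\prod\Xi_x^{a_i}\Xi_y^{b_j}\Xi_z^{c_k}$ commutes with all loops if and only if
\begin{equation}
\textstyle\sum a_i\equiv\sum b_j\equiv\sum c_k \pmod 2 .
\end{equation}
These are two independent linear conditions on $C(\mathcal G)$, and they are compatible with the global relation because that relation has equal counts $L,L,L$. So they cut out a subgroup of rank $3L-1-2=3(L-1)$, proving Eq.~\eqref{eq:SMrankS}. The subgroup is spanned by the even-even-even sector, generated by the double loops $D_a^i$ of each type, together with one odd-odd-odd element, which can be taken to be $H_L$.

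\textbf{Step 3: parity dependence.} Now I account for the global relation. For odd $L$ it has odd counts, so it expresses $H_L$ (odd-odd-odd) as a product of double loops. The double loops, $L-1$ per type, then already have rank $3(L-1)$ and are independent. For even $L$, the global relation lies in the even sector. Using $\prod_{i\,\rm even}D_a^i=\prod_i\Xi_a^i$, it becomes exactly the stated relation $\prod_{i\,{\rm even}}D_x^i\prod_{j\,{\rm even}}D_y^j\prod_{k\,{\rm even}}D_z^k=I$. Consequently the double loops have rank only $3(L-1)-1$. Moreover, no product of double loops has odd counts, so $H_L$ is independent of them and supplies the missing generator. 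As a consistency check, the double loops lie in $\mathcal G$ directly by the column identity Eq.~\eqref{eq:SMcolumnidentity} and its rotated analogues, although Step~2 already guarantees membership in $\mathcal G$.
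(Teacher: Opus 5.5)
Your proposal is correct, but it runs in the opposite direction from the paper's proof.

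The paper builds $\mathcal S$ from inside $\mathcal G$. It writes the double loops as strip products of triangles and $H_L$ as the product of up-triangles in Eq.~\eqref{eq:SMHLgauge}. It then uses the same fact you use in Step~1 (a product of natural loops is trivial only if all loop coefficients are equal) to get the parity-dependent relations and the (in)dependence of $H_L$. Finally it caps the rank from above: the anticommuting pair $\Xi_x,\Xi_z\in C(\mathcal G)$ forces $k\ge 1$, so Eq.~\eqref{eq:SMsubsystemk} gives $\operatorname{rank}\mathcal S\le 3(L-1)$.

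You work from $C(\mathcal G)$ instead. The count $\operatorname{rank}C(\mathcal G)=3L-1$, together with the single global loop relation, shows that the loops exhaust $C(\mathcal G)$. The bicommutant identity then turns $\mathcal S=\mathcal G\cap C(\mathcal G)$ into the center of the loop group, which you compute exactly through $\sum a_i\equiv\sum b_j\equiv\sum c_k$.

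Your route has three advantages. It gives an exact description of $\mathcal S$ rather than a lower bound matched to an upper bound. It makes membership of $D_a^j$ and $H_L$ in $\mathcal G$ automatic. It also identifies every bare logical operator as a loop product. The paper's route instead produces explicit triangle factorizations of the stabilizers, which it reuses later, for instance in Lemma~\ref{lem:SMHLinM}.

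One small correction to Step~1: a triangle does not meet every line in zero or two sites. For each orientation it meets one line in a single site, where its Pauli matches the loop type. It meets a second line in two sites, where both Paulis anticommute with the loop. The commutation conclusion is unaffected.
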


\begin{proof}
The natural loops $\Xi_x^j$, $\Xi_y^j$, and $\Xi_z^j$ commute with every triangle gauge generator.
Moreover, neighboring parallel loops are generated by products of triangle operators over the strip between them.
For example, Eq.~\eqref{eq:SMcolumnidentity} gives $D_y^m=\Xi_y^m\Xi_y^{m+1}\in\mathcal G$.
Analogous products along the horizontal and diagonal directions give $D_z^j,D_x^j\in\mathcal G$.
Since these operators also commute with $\mathcal G$,
\begin{equation}
    D_a^j=\Xi_a^j\Xi_a^{j+1}\in Z(\mathcal G),
    \qquad a=x,y,z.
\label{eq:SMDinS}
\end{equation}
Within each family, $\prod_{j=1}^{L}D_a^j=I$, so at most $L-1$ double loops of each type are independent.

We now determine the relations between the three families.
Consider a product of elementary natural loops and denote by $x_j,y_j,z_j\in\mathbb F_2$ whether $\Xi_x^j,\Xi_y^j,\Xi_z^j$ occurs.
At a lattice site $(m,n)$, cancellation of the local Pauli operator requires the diagonal $x$ loop, vertical $y$ loop, and horizontal $z$ loop through that site to have the same occupation.
Hence all $x_j$, $y_j$, and $z_j$ must be equal.
Thus a product of natural loops is the identity only in two cases: either no loop is present, or all $L$ loops of all three families are present.

On the other hand, any product of double loops contains an even number of elementary loops within each family.
For odd $L$, the second possibility above contains an odd number $L$ of loops in every family and therefore cannot arise from double loops.
Consequently, after removing the single relation $\prod_jD_a^j=I$ within each family, the resulting $3(L-1)$ double loops are independent.

For even $L$, all $L$ elementary loops in a family can be generated by alternating double loops.
This gives one additional relation,
\begin{equation}
    \left(\prod_{j\ {\rm even}}D_x^j\right)
    \left(\prod_{j\ {\rm even}}D_y^j\right)
    \left(\prod_{j\ {\rm even}}D_z^j\right)=I,
\label{eq:SMevenDrelation}
\end{equation}
up to an irrelevant overall phase.
Since the only nontrivial identity among elementary loops is the one containing all three complete loop families, this is the only additional relation.
The double-loop subgroup therefore has rank $3(L-1)-1$ for even $L$.

The missing stabilizer can be chosen as
\begin{equation}
    H_L=\Xi_x^L\Xi_y^L\Xi_z^L.
\label{eq:SMHL}
\end{equation}
To verify that $H_L\in\mathcal G$, choose
\begin{equation}
    \Xi_x^L=\prod_{m=0}^{L-1}X_{m,m},\qquad
    \Xi_y^L=\prod_{n=0}^{L-1}Y_{0,n},\qquad
    \Xi_z^L=\prod_{m=0}^{L-1}Z_{m,0}.
\label{eq:SMHLloops}
\end{equation}
Using Eq.~\eqref{eq:SMABindex}, the product of all $A$-type triangles in the triangular region $0\le n\le m\le L-1$ gives
\begin{equation}
    \prod_{m=0}^{L-1}\prod_{n=0}^{m}A_{m,n}
    =
    \Xi_x^L\Xi_y^L\Xi_z^L
    =
    H_L,
\label{eq:SMHLgauge}
\end{equation}
up to an irrelevant overall phase.
All interior Pauli operators cancel, leaving only the three boundary loops; their common intersection contributes only an overall phase.
Hence $H_L\in\mathcal G$.
Since the three loop factors commute with every gauge generator, $H_L\in Z(\mathcal G)$.

The complete loop families satisfy
\begin{equation}
    \left(\prod_{j=1}^{L}\Xi_x^j\right)
    \left(\prod_{j=1}^{L}\Xi_y^j\right)
    \left(\prod_{j=1}^{L}\Xi_z^j\right)=I.
\label{eq:SMalllooprelation}
\end{equation}
Therefore $H_L$ is equivalent to the product of the remaining $L-1$ loops of each type.
For odd $L$, $L-1$ is even, and these loops can be paired into neighboring double loops; explicitly,
\begin{equation}
    H_L=
    \prod_{a=x,y,z}
    \prod_{\substack{j=1\\ j\ {\rm odd}}}^{L-2}D_a^j .
\label{eq:SMHLodd}
\end{equation}
Hence $H_L$ is generated by the double loops.
For even $L$, however, $L-1$ is odd, while every product of double loops contains an even number of elementary loops in each family.
Thus $H_L$ is independent of the double-loop subgroup and supplies the missing stabilizer generator.

It remains to show that the stabilizers constructed above exhaust $Z(\mathcal G)$.
Choose intersecting natural loops $\Xi_x$ and $\Xi_z$.
Both commute with every element of $\mathcal G$, but they anticommute with each other.
Therefore neither can belong to $\mathcal G$, and the subsystem code contains at least one logical qubit, $k\ge1$.
Using
\begin{equation}
    k=L^2-\frac{\operatorname{rank}\mathcal G+\operatorname{rank}\mathcal S}{2}
\label{eq:SMsubsystemk}
\end{equation}
together with $\operatorname{rank}\mathcal G=(2L-1)(L-1)$ gives $\operatorname{rank}\mathcal S\le3(L-1)$.
Since we have explicitly constructed $3(L-1)$ independent stabilizers for either parity of $L$, the bound is saturated and Eq.~\eqref{eq:SMrankS} follows.
\end{proof}

Combining Propositions~\ref{prop:SMgauge_rank} and \ref{prop:SMstabilizer} gives one logical qubit and
\begin{equation}
    r=\frac{\operatorname{rank}\mathcal G-\operatorname{rank}\mathcal S}{2}
    =(L-1)(L-2)
\label{eq:SMgaugequbits}
\end{equation}
gauge qubits.
A convenient logical pair is $\widetilde X=\Xi_x$ and $\widetilde Z=\Xi_z$.
They commute with $\mathcal G$, lie outside $\mathcal G$, and anticommute when chosen on intersecting noncontractible lines.

\subsection{Distance}
\label{sec:SMstaticdistance}

We finally determine the distance and complete the proof of Proposition~1 in the main text.

\begin{lemma}[Disjoint representatives]
\label{lem:SMdisjoint}
Let $Q$ be a logical Pauli operator with $M$ equivalent representatives
$Q_j=Qs_j$, where $s_j\in\mathcal S$, whose supports are pairwise disjoint.
Then any logical Pauli operator $P$ satisfying $\{P,Q\}=0$ has
$\operatorname{wt}(P)\ge M$.
\end{lemma}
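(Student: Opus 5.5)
The plan is a direct support-counting argument that uses only the stabilizer property of the $s_j$. First we transfer the anticommutation from $Q$ to every representative. Each $s_j\in\mathcal S=Z(\mathcal G)$ commutes with $P$, because $P$ is a logical Pauli operator and therefore lies in the centralizer of $\mathcal S$ (indeed of $\mathcal G$). Up to sign this gives
\begin{equation}
PQ_j=PQs_j=-QPs_j=-Qs_jP=-Q_jP,
\end{equation}
so $\{P,Q_j\}=0$ for every $j=1,\dots,M$.

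Next we use the elementary fact about Pauli strings. Two Pauli operators anticommute if and only if the number of sites where both act nontrivially and their single-qubit factors anticommute is odd. Hence $\{P,Q_j\}=0$ forces at least one site $r_j\in\operatorname{supp}(P)\cap\operatorname{supp}(Q_j)$. If the supports were disjoint, $P$ and $Q_j$ would commute.

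Finally we count. The supports of $Q_1,\dots,Q_M$ are pairwise disjoint, so the sites $r_1,\dots,r_M$ are pairwise distinct. All of them lie in $\operatorname{supp}(P)$, which gives $\operatorname{wt}(P)=|\operatorname{supp}(P)|\ge M$.

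No step is a real obstacle. The one point to state carefully is that $P$ commutes with the stabilizers $s_j$. This follows from the definition of a logical operator of the subsystem code: a bare logical commutes with $\mathcal G$, and a dressed logical is a bare logical times a gauge element, so it still commutes with $Z(\mathcal G)$. Overall phases are irrelevant throughout.

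In the subsequent application, taking $Q=\Xi_x$ (or $\Xi_z$) with the $L$ parallel loops $\Xi_x^j=\Xi_x^1 D$-products as disjoint representatives gives $d\ge L$.
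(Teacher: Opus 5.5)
Your proof is correct and follows the paper's argument: $P$ commutes with each $s_j\in\mathcal S$, so $\{P,Q_j\}=0$ for every $j$, hence $P$ overlaps each of the pairwise-disjoint supports and $\operatorname{wt}(P)\ge M$. Your version only adds the explicit sign computation and the Pauli-string overlap criterion, which the paper leaves implicit.
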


\begin{proof}
Since $P$ commutes with every stabilizer,
$\{P,Q_j\}=0$ for all $j$.
Thus $P$ must overlap the support of every $Q_j$ on at least one qubit.
Because these supports are pairwise disjoint,
$\operatorname{wt}(P)\ge M$.
\end{proof}

\begin{proposition}[Static-code distance]
\label{prop:SMstaticdistance}
The static Majorana XYZ subsystem code has distance
\begin{equation}
    d=L.
\label{eq:SMstaticDistance}
\end{equation}
Consequently, its parameters are
\begin{equation}
    [[L^2,1,(L-1)(L-2),L]].
\label{eq:SMstaticParameters}
\end{equation}
\end{proposition}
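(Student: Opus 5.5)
The plan is to prove matching upper and lower bounds on the distance $d$, where $d$ is the minimum weight of a dressed logical operator: an element of $C(\mathcal S)\setminus\mathcal G$. The parameters then follow at once, since Propositions~\ref{prop:SMgauge_rank} and \ref{prop:SMstabilizer} already give $k=1$ and $r=(L-1)(L-2)$.

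\textbf{Upper bound.} A single noncontractible loop, say $\Xi_x^j$, has weight $L$. It commutes with every triangle generator, so it lies in $C(\mathcal G)\subset C(\mathcal S)$. It anticommutes with an intersecting $\Xi_z^k$, which also lies in $C(\mathcal G)$. Hence $\Xi_x^j\notin\mathcal G$, because any element of $\mathcal G$ commutes with all of $C(\mathcal G)$. So $\Xi_x^j$ is a nontrivial logical operator and $d\le L$.

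\textbf{Lower bound.} Let $P\in C(\mathcal S)\setminus\mathcal G$ be an arbitrary dressed logical operator. Write $P=P_{\rm bare}\,g$ with $g\in\mathcal G$, where $P_{\rm bare}$ is a nontrivial bare logical operator. Because $k=1$, the bare logical group $C(\mathcal G)/\mathcal S$ is generated by the classes of $\Xi_x$ and $\Xi_z$, so $P_{\rm bare}$ is $\Xi_x$, $\Xi_z$ or $\Xi_x\Xi_z$ up to $\mathcal S$. The next step is to note that $P$ anticommutes with at least one of $\Xi_x$ or $\Xi_z$. Indeed, both are in $C(\mathcal G)$, so commutation with them is unaffected by the dressing $g$. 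Moreover, $P_{\rm bare}$ nontrivial means it anticommutes with some element of the bare logical algebra.

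The key observation is that each logical class has $L$ pairwise disjoint representatives. The parallel translates $\Xi_x^1,\dots,\Xi_x^L$ are disjoint diagonal lines covering all $L^2$ sites, and they are mutually equivalent modulo $\mathcal S$ because $\Xi_x^{j}\Xi_x^{j+1}=D_x^j\in\mathcal S$ by Eq.~\eqref{eq:SMDinS}. The same holds for the horizontal translates $\Xi_z^j$. Apply Lemma~\ref{lem:SMdisjoint} with $Q=\Xi_x$ or $Q=\Xi_z$, whichever anticommutes with $P$. It only requires that $P$ commute with the $s_j\in\mathcal S$, which holds since $P\in C(\mathcal S)$, so dressing is harmless. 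This gives $\operatorname{wt}(P)\ge L$, and therefore $d=L$.

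\textbf{Expected obstacle.} The main care point is ensuring the lemma is applied to dressed rather than only bare operators. This requires two checks: that the conjugate can be taken among the static loops $\Xi_x,\Xi_z\in C(\mathcal G)$, and that the disjoint representatives differ by genuine stabilizers and not merely gauge elements. Both follow from Eq.~\eqref{eq:SMDinS} and $k=1$.
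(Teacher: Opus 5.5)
Your proof is correct and follows the paper's argument. The upper bound comes from a weight-$L$ noncontractible loop, and the lower bound combines $k=1$ with the $L$ disjoint translates of $\Xi_x$ and $\Xi_z$, related by the double-loop stabilizers, and Lemma~\ref{lem:SMdisjoint}. Your explicit handling of dressed logicals (gauge dressing cannot change commutation with $\Xi_x,\Xi_z\in C(\mathcal G)$, and the lemma only needs $P\in C(\mathcal S)$) makes precise a point the paper leaves implicit, but it is not a different route.
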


\begin{proof}
Recall the logical pair
$\widetilde X=\Xi_x$ and $\widetilde Z=\Xi_z$.
Multiplying $\Xi_x^j$ by successive $x$-type double-loop stabilizers generates all translated representatives, $\Xi_x^j\sim\Xi_x^{j+1}$, and similarly for the $z$-type loops.
Hence both $\widetilde X$ and $\widetilde Z$ admit $L$ equivalent representatives supported on the $L$ mutually disjoint parallel noncontractible lines.

Since the code encodes a single logical qubit, every nontrivial logical Pauli is equivalent to one of
$\widetilde X$, $\widetilde Z$, or
$\widetilde Y=i\widetilde X\widetilde Z$.
The $L$ mutually disjoint representatives of $\widetilde X$ imply, by Lemma~\ref{lem:SMdisjoint}, that any logical operator anticommuting with $\widetilde X$ has weight at least $L$.
This applies to both $\widetilde Z$ and $\widetilde Y$.
Similarly, the $L$ mutually disjoint representatives of $\widetilde Z$ imply that both $\widetilde X$ and $\widetilde Y$ have weight at least $L$.
Hence every nontrivial logical Pauli has weight at least $L$, and therefore $d\ge L$.

Conversely, $\Xi_x$ and $\Xi_z$ themselves are nontrivial logical operators of weight $L$, so $d\le L$.
Therefore $d=L$.
Together with
$k=1$ and $r=(L-1)(L-2)$ obtained above, this gives
Eq.~\eqref{eq:SMstaticParameters}.
\end{proof}

\section{Parallelogram measurement algebra}
\label{sec:SMparallelogram}

A neighboring pair of triangle gauge generators produces a four-qubit parallelogram of a single Pauli type,
\begin{equation}
    P_{a,p}=\prod_{r\in p}\sigma_r^a,\qquad a=x,y,z .
\label{eq:SMP}
\end{equation}
For fixed $a$, all translated parallelograms commute.
We denote the corresponding measurement group by $\mathcal P_a$.

\subsection{Rank and generating sets}
\label{sec:SMPrank}

\begin{proposition}[Parallelogram measurement algebra]
\label{prop:SMparallelogram}
For each $a=x,y,z$, the measurement group $\mathcal P_a$ is Abelian and has
\begin{equation}
    \operatorname{rank}\mathcal P_a=(L-1)^2.
\label{eq:SMPrank}
\end{equation}
It admits an equivalent independent generating set consisting of $(L-1)(L-2)$ local parallelograms and the $L-1$ independent $a$-type double loops.

Furthermore, $\mathcal P_a$ can be extended to a maximal Abelian subgroup of $\mathcal G$ of rank $L^2-1$.
For odd $L$, the double loops of the other two Pauli types provide $2(L-1)$ additional independent generators.
For even $L$, they provide $2(L-1)-1$ additional independent generators, and $H_L$ supplies the remaining one.
\end{proposition}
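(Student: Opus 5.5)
Abelianness is immediate: two parallelograms of the same type carry only one Pauli label, so they commute as Pauli products. Each $P_{a,p}$ is a product of two neighboring triangle generators, so $\mathcal P_a\subset\mathcal G$. For the rank I will use coordinates. With the labeling of Eq.~\eqref{eq:SMABindex}, a parallelogram of type $a$ is supported on four sites $r_0+\{0,u,v,u+v\}$, where the lattice vectors $u,v$ depend on $a$. Its symplectic vector is then the indicator of that translate of a $2\times2$ block, read in the coordinates $(s,t)$ adapted to $(u,v)$. Identifying $\mathbb F_2$-subsets of $\mathbb Z_L^2$ with $\mathbb F_2[s,t]/(s^L-1,t^L-1)$, the group $\mathcal P_a$ becomes the ideal generated by $(1+s)(1+t)$.

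The quotient ring by this ideal is $\mathbb F_2[s,t]/(s^L-1,t^L-1,(1+s)(1+t))$. Its dimension is $2L-1$: monomials $s^i$ and $t^j$ survive, and only the constant term is shared. Hence $\operatorname{rank}\mathcal P_a=L^2-(2L-1)=(L-1)^2$.

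For the alternative basis I will proceed as follows.
\begin{enumerate}
\item Keep the $(L-1)(L-2)$ parallelograms whose base point lies in a fundamental $(L-1)\times(L-2)$ block.
\item Show that the product of a full row of parallelograms along one direction telescopes to $\Xi_a^j\Xi_a^{j+1}=D_a^j$. This is the same mechanism as Eq.~\eqref{eq:SMcolumnidentity}, and it places $\mathcal D_a\subset\mathcal P_a$.
\item Show that the retained local parallelograms, together with $L-1$ independent $D_a^j$, span the ideal. Independence follows by the boundary-peeling argument used in Proposition~\ref{prop:SMgauge_rank}; since the count equals $(L-1)^2$, independence suffices.
\end{enumerate}

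For the maximal Abelian extension I will adjoin $\mathcal D_b$ and $\mathcal D_c$ for $b,c\neq a$. These lie in $\mathcal S=Z(\mathcal G)$, so they commute with $\mathcal P_a$ and the group stays Abelian. By Proposition~\ref{prop:SMstabilizer}, the double loops of all three types have rank $3(L-1)$ for odd $L$ and $3(L-1)-1$ for even $L$. I will show that $\mathcal P_a\cap\langle\mathcal D_b,\mathcal D_c\rangle$ is trivial modulo what is already forced by $\mathcal D_a$. A nontrivial product of $b$- and $c$-type loops would have to be pure $a$-type in every symplectic component. By the site-occupation argument of Proposition~\ref{prop:SMstabilizer}, the only such coincidence is the global relation Eq.~\eqref{eq:SMevenDrelation}. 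That relation costs exactly one generator for even $L$, giving $2(L-1)$ new generators for odd $L$ and $2(L-1)-1$ for even $L$. For even $L$ I then adjoin $H_L\in Z(\mathcal G)$. Its independence follows from the same parity count used for Eq.~\eqref{eq:SMHLodd}: elements of $\mathcal P_a$ contain an even number of $a$-loops in any loop decomposition, while $H_L$ contains an odd number. I still need to check this claim for the parallelogram ideal.

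Both cases now reach rank $(L-1)^2+2(L-1)=L^2-1$. Maximality then comes from counting. An Abelian subgroup $\mathcal A$ of $\mathcal G$ containing $\mathcal S$ satisfies $\operatorname{rank}\mathcal A\le\operatorname{rank}\mathcal S+r=3(L-1)+(L-1)(L-2)=L^2-1$, where $r$ is the gauge-qubit count of Eq.~\eqref{eq:SMgaugequbits}. The constructed group contains all of $\mathcal S$, namely the double loops plus $H_L$ when $L$ is even, so it saturates this bound and is maximal.

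The main obstacle I expect is the independence bookkeeping: showing that $\mathcal P_a$ meets the $b,c$ double loops and $H_L$ only in the stated way. The cleanest route is to work in the polynomial-ideal picture. There I would check whether $H_L$, or an element of $\mathcal D_b$, reduces to zero in the quotient ring of dimension $2L-1$, using the evaluation maps $s=1$ and $t=1$ to detect nonzero classes.
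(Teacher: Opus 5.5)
\textbf{Gap in step 2.} Your ring-theoretic rank count is a valid alternative to the paper's row/column bound plus corner peeling. Your maximality count and your $\mathcal D_b,\mathcal D_c$ bookkeeping follow the paper. The step that fails is step 2: no full strip of parallelograms telescopes to $D_a^j$.

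The mechanism of Eq.~\eqref{eq:SMcolumnidentity} works for triangles because, relative to lines in the loop direction, a triangle has one vertex on one line and two on the neighboring line. A parallelogram $r_0+\{0,u,v,u+v\}$ instead has one vertex on each of two outer lines and two on the middle line. Concretely, take $s,t$ to be the translations along $(1,0)$ and $(0,1)$ of Eq.~\eqref{eq:SMABindex}. Then $A_{m,n}B_{m,n}\propto X_{m,n}X_{m+1,n}X_{m,n+1}X_{m+1,n+1}$, and the diagonal strip gives $(1+s)(1+t)\sum_k(st)^k=(s+s^{-1})\sum_k(st)^k$. That is $\Xi_x^{j-1}\Xi_x^{j+1}$, two lines two spacings apart. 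Strips along $u$ or $v$ give the identity; these are exactly the $2L-1$ row/column relations. $\mathcal P_y$ and $\mathcal P_z$ behave the same way (vertical strips give $\Xi_y^m\Xi_y^{m+2}$, horizontal ones $\Xi_z^{n}\Xi_z^{n+2}$).

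For odd $L$ you could telescope as in Eq.~\eqref{eq:SMT1oddDoubleLine}. For even $L$, however, step-two products never join lines of opposite parity, so strips alone cannot produce $D_a^j$. The inclusion $\mathcal D_a\subset\mathcal P_a$ feeds three later steps: your basis count, your ``modulo $\mathcal D_a$'' reduction, and your claim that the extended group contains all of $\mathcal S$. So this is a real gap, not a cosmetic one.

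\textbf{Repair, and the $H_L$ step.} The fix is already inside your framework, but it needs a genuinely two-dimensional region. Write $R=\mathbb F_2[s,t]/(s^L-1,t^L-1)$ and $I=((1+s)(1+t))$. Your spanning argument gives $\dim R/I\le 2L-1$. The evaluation map $f\mapsto(f(s,1),f(1,t))$ kills $I$ and is onto the $(2L-1)$-dimensional space $\{(g,h):g(1)=h(1)\}$. Hence $I$ is exactly the set of $f$ with $f(s,1)=0=f(1,t)$, and $(1+s)\sum_k(st)^k$ is visibly in $I$. Explicitly, set $\Phi(s)=\sum_k s^k$ and take the triangle $g=\sum_{0\le i<k\le L-1}s^kt^i$. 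Then $(1+t)g=\Phi(s)+\sum_k(st)^k$, and hence $(1+s)(1+t)g=(1+s)\sum_k(st)^k=D_x$. This two-dimensional region is the construction the paper uses.

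Your $H_L$ step also needs rework. Evenness of the number of $a$-loops in elements of $\mathcal P_a$ does not by itself control the $b$- and $c$-loops in $H_LQ_bQ_c$. The paper argues directly instead. Membership in $\mathcal P_a$ forces $H_LQ_bQ_c$ to be pure $a$-type. The intersection argument then forces the $b$- and $c$-loop occupations to be all or none, and both options are even for even $L$. This contradicts the odd number of $b$-loops in $H_LQ_b$.
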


\begin{proof}
It is sufficient to consider $\mathcal P_x$.
All $x$-type parallelograms mutually commute.
There are $L^2$ translated parallelograms, while the products along the $L$ complete rows and $L$ complete columns give $2L-1$ independent relations.
Hence
\begin{equation}
    \operatorname{rank}\mathcal P_x
    \le L^2-(2L-1)=(L-1)^2.
\label{eq:SMPupper}
\end{equation}

Conversely, the $(L-1)^2$ parallelograms in a contiguous $(L-1)\times(L-1)$ block generate all remaining parallelograms through the row and column relations.
They are also linearly independent.
Indeed, suppose a product of operators from this block is the identity.
At a corner qubit, only one parallelogram in the chosen block acts nontrivially, so the coefficient of that corner parallelogram must vanish.
Removing it exposes the next corner, and repeating this peeling argument along the boundary and then inward forces every parallelogram coefficient to vanish.
Thus the $(L-1)^2$ operators are linearly independent, which means the upper bound in Eq.~\eqref{eq:SMPupper} is saturated, and $\operatorname{rank}\mathcal P_x=(L-1)^2$.

We next show that these local parallelograms generate all diagonal double loops $D_x^j=\Xi_x^j\Xi_x^{j+1}$.
As shown in Fig.~\ref{figS:P_operator_set}, the product of the parallelograms enclosed by the blue dashed contour cancels in the interior and leaves only the two neighboring diagonal noncontractible loops,
\begin{equation}
    \prod_{p\in\mathcal R_j}P_{x,p}=\Xi_x^j\Xi_x^{j+1}=D_x^j.
\label{eq:SMPxDx}
\end{equation}
Translating this construction by one lattice spacing generates every $D_x^j$.
Since the translation is by one line, the construction applies equally to odd and even $L$.

We now construct the alternative basis.
The orange region in Fig.~\ref{figS:P_operator_set} contains
$(L-1)(L-2)$ local parallelograms.
Using the column relations, these operators first generate the remaining parallelograms in the same set of columns, giving $L(L-2)$ local parallelograms in total.
We then use the double-loop relations in Eq.~\eqref{eq:SMPxDx}.
For the blue dashed contour shown in Fig.~\ref{figS:P_operator_set}, all parallelograms entering the relation are already known except for the one labeled $1$.
Since the corresponding double loop $D_x^j$ is also included in the generating set, the relation determines the parallelogram labeled $1$.
Translating the dashed contour by one lattice spacing gives the next relation, in which the only unknown parallelogram is the one labeled $2$.
Repeating this procedure successively generates all remaining
parallelograms in that column.
Finally, the row relations generate the parallelograms in the last column.

Hence the $(L-1)(L-2)$ local parallelograms in the orange region together with the $L-1$ independent double loops
$D_x^1,\ldots,D_x^{L-1}$ generate all of $\mathcal P_x$.
Their total number is $(L-1)(L-2)+(L-1)=(L-1)^2$.
Since this equals $\operatorname{rank}\mathcal P_x$, they are linearly independent and therefore form an alternative basis of $\mathcal P_x$.
Thus the two generating descriptions of $\mathcal P_x$ are equivalent.

\begin{figure}[t]
    \centering
    \includegraphics[width=0.4\columnwidth]{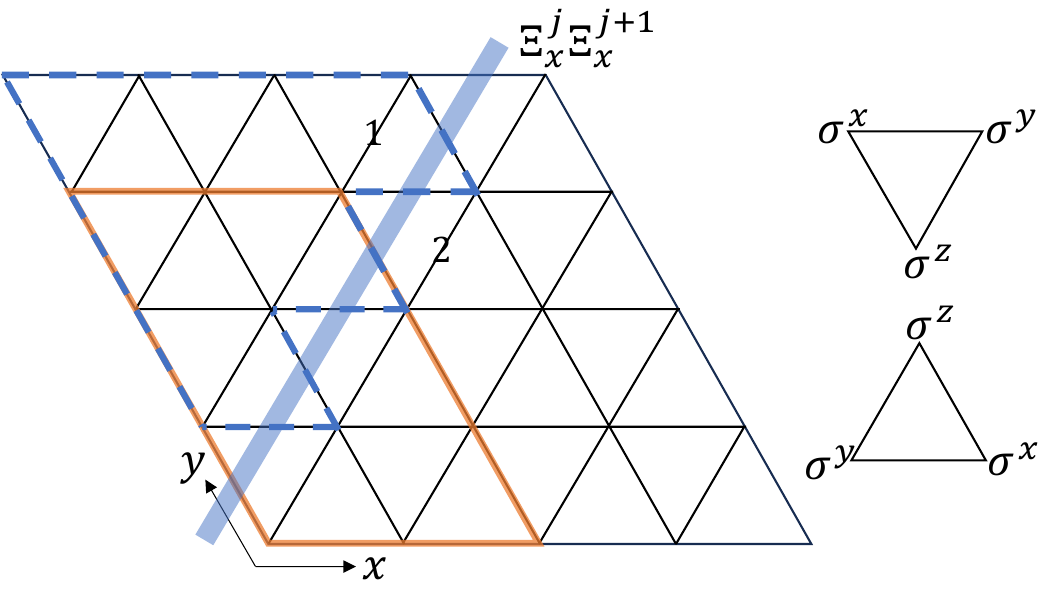}
    \caption{
    Construction of an equivalent generating set for $\mathcal P_x$.
    The orange region contains the $(L-1)(L-2)$ local $x$-type parallelograms retained in the alternative basis.
    The product of the parallelograms enclosed by the blue dashed contour cancels in the interior and leaves the neighboring noncontractible loops $\Xi_x^j$ and $\Xi_x^{j+1}$, giving the double loop $D_x^j=\Xi_x^j\Xi_x^{j+1}$.
    Translating this construction generates all $L-1$ independent $x$-type double loops.
    Together with the orange parallelograms, these double loops successively reconstruct the omitted local parallelograms, illustrated by labels $1$ and $2$, and hence generate the full group $\mathcal P_x$.}
    \label{figS:P_operator_set}
\end{figure}

We finally extend $\mathcal P_x$ by double loops of the other two Pauli types.
Since all double loops belong to the stabilizer group, they commute with every element of $\mathcal P_x$.
It remains to determine how many of the $y$- and $z$-type double loops are linearly independent modulo $\mathcal P_x$.

Consider a product
\begin{equation}
    Q_yQ_z,\qquad Q_y\in\mathcal D_y,\quad Q_z\in\mathcal D_z,
\label{eq:SMQyQz}
\end{equation}
and suppose that $Q_yQ_z\in\mathcal P_x$.
Since every element of $\mathcal P_x$ is a pure-$X$ Pauli operator, the $Y$ and $Z$ actions in $Q_yQ_z$ must combine to give only $I$ or $X$ on every qubit.
Let $y_j,z_k\in\mathbb F_2$ denote whether the elementary loops
$\Xi_y^j$ and $\Xi_z^k$ occur in $Q_y$ and $Q_z$, respectively.
At the intersection of $\Xi_y^j$ and $\Xi_z^k$, the operator is of pure $X$ type only if
$y_j=z_k$.
Since every $y$ loop intersects every $z$ loop, this condition for all intersections implies that all $y_j$ and $z_k$ are equal.
Thus there are only two possibilities: either no $y$ or $z$ loop occurs, or all $L$ loops of both families occur.

For odd $L$, every product of double loops contains an even number of elementary loops in each family, whereas the second possibility contains $L$ loops, which is odd.
It is therefore excluded.
Hence $Q_yQ_z\in\mathcal P_x$ implies $Q_y=Q_z=I$.
The $L-1$ independent $y$-type and $L-1$ independent $z$-type double loops are therefore all independent modulo $\mathcal P_x$ and contribute $2(L-1)$ additional generators.
The resulting commuting subgroup has rank
\begin{equation}
    (L-1)^2+2(L-1)=L^2-1.
\label{eq:SMmaxodd}
\end{equation}

For even $L$, the second possibility is allowed because $L$ is even.
The product of all $y$- and $z$-type loops is a pure-$X$ operator and is equal, up to phase, to the product of all $x$-type loops.
Equivalently, this is precisely the additional double-loop relation Eq.~\eqref{eq:SMevenDrelation}.
As shown in Proposition~\ref{prop:SMstabilizer}, this is the only additional relation among the three double-loop families.
Since the $x$-type double loops are already contained in $\mathcal P_x$, the $y$- and $z$-type families therefore contribute
$2(L-1)-1$ additional independent generators.

The remaining stabilizer $H_L$ is also independent of this enlarged group.
Indeed, suppose that
$H_LQ_yQ_z\in\mathcal P_x$ for some
$Q_y\in\mathcal D_y$ and $Q_z\in\mathcal D_z$.
The operator $H_L$ contributes one additional elementary $y$ loop and one additional elementary $z$ loop.
Consequently, $H_LQ_yQ_z$ contains an odd number of elementary loops in each of the $y$ and $z$ families.
For it to be a pure-$X$ operator, the same intersection argument above would require either none or all $L$ loops of each family.
Both possibilities contain an even number of loops for even $L$, giving a contradiction.
Thus
\begin{equation}
    H_L\notin
    \langle\mathcal P_x,\mathcal D_y,\mathcal D_z\rangle .
\label{eq:SMHLindependentPx}
\end{equation}
Adding $H_L$ therefore raises the rank to $(L-1)^2+\bigl[2(L-1)-1\bigr]+1=L^2-1$.

Finally, the quotient $\mathcal G/\mathcal S$ contains
$\operatorname{rank}\mathcal G-\operatorname{rank}\mathcal S$
noncentral Pauli generators, which occur in anticommuting pairs.
Hence a maximal Abelian subgroup of $\mathcal G$ has rank
\begin{equation}
    \frac{\operatorname{rank}\mathcal G+\operatorname{rank}\mathcal S}{2}
    =L^2-1.
\label{eq:SMmaxrank}
\end{equation}
The commuting subgroups constructed above attain this rank and are therefore maximal.
The results for $\mathcal P_y$ and $\mathcal P_z$ follow by cyclic symmetry.
\end{proof}

\subsection{Centralizer under successive measurements}
\label{sec:SMcentralizer}

The parity dependence of the Floquet protocol enters through the subgroup of one measurement family that commutes with the next.
For later convenience, define
\begin{equation}
    E_x^j=\Xi_{h,x}^j\Xi_{h,x}^{j+1},
\label{eq:SMEx}
\end{equation}
namely the nearest-neighbor pair of horizontal $x$ loops.

\begin{proposition}[Successive-measurement centralizer]
\label{prop:SMcentralizer}
For odd $L$,
\begin{equation}
    \mathcal P_x\cap C(\mathcal P_y)=\mathcal D_x .
\label{eq:SModdcentralizer}
\end{equation}
For even $L$,
\begin{equation}
    \mathcal P_x\cap C(\mathcal P_y)
    =
    \left\langle
    D_x^1,\ldots,D_x^{L-1},
    E_x^1,\ldots,E_x^{L-1}
    \right\rangle ,
\label{eq:SMevencentralizer1}
\end{equation}
where the two families on the right obey one additional relation and therefore have total rank $2L-3$.
After adjoining the newly measured family $\mathcal P_y$, this can equivalently be written as
\begin{equation}
    \left\langle\mathcal P_y,\,\mathcal P_x\cap C(\mathcal P_y)\right\rangle=\langle\mathcal P_y,\mathcal D_x,\mathcal D_z\rangle .
\label{eq:SMevencentralizer}
\end{equation}
The corresponding results for the other successive pairs follow by
cyclic permutation of $x,y,z$.
\end{proposition}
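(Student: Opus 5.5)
The plan is to reduce the centralizer condition to a binary linear problem on the coefficient vector of an element of $\mathcal P_x$. Write a general element as $P=\prod_p P_{x,p}^{c_p}$ with $c\in\mathbb F_2^{L\times L}$, taken modulo the $2L-1$ row and column relations of Proposition~\ref{prop:SMparallelogram}. Such a $P$ is a pure-$X$ operator. Its support $s(c)\subset\mathbb Z_L^2$ is the mod-2 sum of the four-site footprints of the chosen parallelograms, which is a convolution $s=c\ast\chi_x$ with the $x$-parallelogram shape.

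A $y$-type parallelogram $P_{y,q}$ commutes with $P$ iff $|s\cap q|$ is even. Hence $P\in C(\mathcal P_y)$ iff $s\ast\tilde\chi_y=0$, where $\tilde\chi_y$ is the reflected $y$-parallelogram shape. This turns the problem into computing the kernel of the composite translation-invariant map $c\mapsto c\ast\chi_x\ast\tilde\chi_y$ on the torus, modulo the kernel of $c\mapsto c\ast\chi_x$ (the row and column relations).

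To compute it, I will use polynomials in $\mathbb F_2[u,v]/(u^L-1,v^L-1)$. Each four-site parallelogram footprint factorizes as a product of two binomials, e.g.\ $\chi_x=(1+u)(1+uv)$ and $\chi_y=(1+v)(1+uv)$ or similar, read off from Eq.~\eqref{eq:SMABindex}. The commutation condition then becomes
\begin{equation}
s\,(1+\bar u^{\epsilon})(1+\bar u\bar v)=0,
\end{equation}
with $s=c\,\chi_x$ and orientations fixed by the figure conventions. Over $\mathbb F_2$, the annihilator of a binomial $1+w$ in the torus ring consists of functions constant along $w$-lines, i.e.\ sums of the corresponding noncontractible loops. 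The annihilator of a product of two binomials is spanned by loops along each of the two directions, together with a possible extra piece from the interaction of the two factors.

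From this description, the support $s$ must be a sum of lines in two of the three directions. Requiring $s$ to also lie in the image of multiplication by $\chi_x$ leaves two families. The diagonal lines, taken in pairs, give $D_x^j$, which were already shown in Eq.~\eqref{eq:SMPxDx} to lie in $\mathcal P_x$. The other family is the horizontal lines, which can lie in the image only in pairs $E_x^j$, and only when a parity condition on $L$ holds.

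The main obstacle is exactly this parity step: deciding when $\Xi_{h,x}^j\Xi_{h,x}^{j+1}$ is a product of $x$-parallelograms.

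1. I will try to solve $c\,(1+u)(1+uv)=(1+v)\,h(u)$ explicitly. Dividing out the common binomial should reduce this to inverting $1+uv$, or an analogous factor, on a line of length $L$.
2. That factor is invertible precisely when $L$ is odd, since $\sum_k (uv)^k$ must be $2$-torsion compatible.
3. For odd $L$, this inversion fails to be consistent on the torus, so the $E_x^j$ are excluded and only $\mathcal D_x$ survives, giving Eq.~\eqref{eq:SModdcentralizer}.
4. For even $L$, an explicit alternating-sum $c$ (a staircase of parallelograms with checkerboard coefficients) realizes $E_x^j$, giving Eq.~\eqref{eq:SMevencentralizer1}.
5. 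The single extra relation among $\{D_x^j,E_x^j\}$ is that the product of alternate $D$'s equals the product of alternate $E$'s, both equal to the full pure-$X$ product of half the lines. This follows from the loop-intersection argument of Proposition~\ref{prop:SMstabilizer}, giving rank $(L-1)+(L-1)-1=2L-3$.

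For Eq.~\eqref{eq:SMevencentralizer}, I then show that modulo $\mathcal P_y$ the $E_x^j$ are equivalent to $D_z^j$. The product $E_x^jD_z^j$ is $\Xi_{h,x}\Xi_{h,z}$ on two horizontal lines, which equals, up to phase, $\Xi_{h,y}$ on those lines. Showing that this double $y$-loop lies in $\mathcal P_y$ is the analogue of Eq.~\eqref{eq:SMPxDx} with the $y$ and $x$ roles exchanged, and should again hold for even $L$ by the same staircase construction. Since both sides have the same rank, the two generating sets coincide, and the cyclic cases follow by symmetry.
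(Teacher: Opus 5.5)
Your framework is the same as the paper's. You encode an element of $\mathcal P_x$ by its binary support and show that commuting with $\mathcal P_y$ forces a sum of lines along the two directions of the $y$-parallelogram (the paper's $s_{u,v}=r_u+c_v$, diagonal plus horizontal lines). You then decide which of these patterns lie in $\mathcal P_x$, and finally trade $E_x^j$ for $D_z^j$ via $E_x^jE_y^j\simeq D_z^j$ with $E_y^j\in\mathcal P_y$. That last step and the relation in your step 5 are fine.

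One point is minor. You leave open a ``possible extra piece'' in the annihilator of the $y$-shape, and it must be ruled out. Solving $s(1+a)(1+b)=0$ in two stages, or using the paper's recursion, shows the kernel is exactly the $(2L-1)$-dimensional space of patterns $r_u+c_v$.

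The genuine gap is the parity step, which you correctly flag as the crux but argue incorrectly.
\begin{itemize}
\item In $\mathbb F_2[w]/(w^L-1)$ the binomial $1+w$ is never a unit, for any $L$, since $(1+w)\sum_{k=0}^{L-1}w^k=1+w^L=0$. So step 2 is false.
\item Steps 2 and 3 also contradict each other. You are left with no valid reason why $E_x^j\notin\mathcal P_x$ for odd $L$, and that exclusion is exactly what Eq.~\eqref{eq:SModdcentralizer} needs.
\item Your equation tests only pure horizontal pairs. It never excludes mixed patterns, such as one diagonal line plus one horizontal line.
\end{itemize}

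The missing idea is a line-parity characterization of $\mathcal P_x$. Each $x$-parallelogram meets every horizontal line and every vertical line (the directions of its two binomial factors) in $0$ or $2$ sites. Hence every element of $\mathcal P_x$ has even weight on all such lines. These are $2L-1$ independent linear conditions, and $\operatorname{rank}\mathcal P_x=(L-1)^2=L^2-(2L-1)$, so they characterize $\mathcal P_x$ exactly. This gives necessity and sufficiency at once, with no explicit staircase needed.

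For $s=r_u+c_v$ the two conditions read:
\begin{itemize}
\item horizontal lines: $Lc_v+\sum_u r_u=0$ for every $v$;
\item vertical lines: $\sum_u r_u+\sum_v c_v=0$, since each vertical line meets every diagonal and every horizontal line once.
\end{itemize}
For odd $L$, the horizontal condition forces $c$ to be constant. The redundancy $(r,c)\mapsto(r+1,c+1)$ absorbs this constant, leaving an even set of diagonal lines, i.e.\ $\mathcal D_x$. For even $L$, the conditions force $\sum_u r_u=\sum_v c_v=0$, which is precisely $\langle\mathcal D_x,E_x^1,\ldots,E_x^{L-1}\rangle$. This is what the paper does by summing $s_{u,v}$ along lines.

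In your polynomial language, the correct statement is about solvability, not invertibility. The full line $\sum_k w^k$ lies in the image of $1+w$ iff $L$ is even, and that is what the alternating staircase of your step 4 realizes.
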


\begin{proof}
Because every element of $\mathcal P_x$ is a pure-$X$ Pauli operator, we represent it by a binary function $s_{u,v}\in\mathbb F_2$, where $s_{u,v}=1$ if the operator acts as $X$ on the qubit $(u,v)$ and $s_{u,v}=0$ otherwise.
Indices are understood modulo $L$.
We choose the coordinates so that increasing $u$ follows the horizontal lattice direction and increasing $v$ follows the diagonal direction shown in Fig.~\ref{figS:P_operator_set}.

Commutation with a $y$-type parallelogram requires an even number of overlapping $X$ operators, and therefore
\begin{equation}
    s_{u,v}+s_{u+1,v}+s_{u,v+1}+s_{u+1,v+1}=0\qquad (\mathrm{mod}\ 2).
\label{eq:SMbinary}
\end{equation}
For fixed $u$, this implies that $s_{u+1,v}+s_{u,v}=s_{u+1,v+1}+s_{u,v+1}$ is independent of $v$.
Comparing with $v=0$ gives $s_{u+1,v}+s_{u,v}=s_{u+1,0}+s_{u,0}$.
Hence the quantity $s_{u,v}+s_{u,0}=s_{u+1,v}+s_{u+1,0}$ is independent of $u$, so it equals $s_{0,v}+s_{0,0}$.
Therefore every solution can be written as
\begin{equation}
    s_{u,v}=r_u+c_v,
\label{eq:SMsolution}
\end{equation}
with, for example, $r_u=s_{u,0}$ and $c_v=s_{0,v}+s_{0,0}$.
Conversely, any configuration of this form satisfies Eq.~\eqref{eq:SMbinary}.
The decomposition has the redundancy $r_u\mapsto r_u+1$, $c_v\mapsto c_v+1$.

We must now impose the condition that the operator belongs not merely to the Pauli centralizer of $\mathcal P_y$, but specifically to $\mathcal P_x$.
Summing over $u$, we have
\begin{equation}
    0=\sum_u s_{u,v}=Lc_v+\sum_u r_u\qquad (\mathrm{mod}\ 2).
\label{eq:SMlineparity}
\end{equation}

For odd $L$, Eq.~\eqref{eq:SMlineparity} fixes $c_v=\sum_u r_u$, independently of $v$.
The $c_v$ contribution can therefore be absorbed into the redundant constant shift in Eq.~\eqref{eq:SMsolution}, leaving a pattern $s_{u,v}=r_u'$ that depends only on $u$.
Such a pattern is a product of complete diagonal $x$ loops $\Xi_x^j$.
Moreover, Eq.~\eqref{eq:SMlineparity} implies that an even number of these loops occurs.
Every even product of parallel loops is generated by nearest-neighbor double loops.
Hence
\begin{equation}
    \mathcal P_x\cap C(\mathcal P_y)=\mathcal D_x,
\end{equation}
which proves Eq.~\eqref{eq:SModdcentralizer}.

For even $L$, the term $Lc_v$ vanishes, so Eq.~\eqref{eq:SMlineparity} instead imposes $\sum_u r_u=0$ while leaving the $c_v$ variables unconstrained.
The periodic geometry supplies a second independent transverse-parity condition, which similarly gives $\sum_v c_v=0$.
Thus the $r_u$ and $c_v$ sectors both survive, but each contains an even number of complete lines.
The $r_u$ sector is generated by the diagonal double loops $D_x^j$, whereas the $c_v$ sector is generated by the horizontal double loops $E_x^j$, which can also be realized by the group $\mathcal{P}_x$ with even $L$.
This proves Eq.~\eqref{eq:SMevencentralizer1}.

For even $L$, the two families are not independent, because of the constraint
\begin{equation}
    \left(\prod_{j\ {\rm even}}D_x^j\right)\left(\prod_{j\ {\rm even}}E_x^j\right)=I,
\label{eq:SMDErelation}
\end{equation}
up to an irrelevant phase.
This is the only relation connecting the two families, and therefore
$\operatorname{rank}[\mathcal P_x\cap C(\mathcal P_y)]=2L-3$.

It remains to obtain the form used in the ISG.
Define similarly the horizontal $y$-loop pair
\begin{equation}
    E_y^j=\Xi_{h,y}^j\Xi_{h,y}^{j+1}.
\label{eq:SMEy}
\end{equation}
The product of the even-site $P_y$ parallelograms in the strip between two neighboring horizontal lines gives precisely $E_y^j$.
Therefore, we have $E_y^j\in\mathcal P_y$.
On the same two horizontal lines, $E_x^jE_y^j=\Xi_{h,z}^j\Xi_{h,z}^{j+1}=D_z^j$
up to an irrelevant Pauli phase.
Consequently, $E_x^j$ and $D_z^j$ differ by an element of $\mathcal P_y$.
Replacing every $E_x^j$ by $D_z^j$ after adjoining $\mathcal P_y$ therefore gives
\begin{equation}
    \left\langle\mathcal P_y,\,\mathcal P_x\cap C(\mathcal P_y)\right\rangle=\langle\mathcal P_y,\mathcal D_x,\mathcal D_z\rangle ,
\end{equation}
which proves Eq.~\eqref{eq:SMevencentralizer}.

The other successive-measurement centralizers follow by cyclic permutation of the lattice directions and Pauli labels.
\end{proof}

\section{Three-step protocol}
\label{sec:SMT3}

We now prove Theorem~1 of the main text.
The measurement sequence is
\begin{equation}
    \mathcal P_x\longrightarrow
    \mathcal P_y\longrightarrow
    \mathcal P_z\longrightarrow
    \mathcal P_x\longrightarrow\cdots ,
\label{eq:SMT3}
\end{equation}
with period $T=3$.
Up to signs determined by the measurement outcomes, the ISG updates as $\mathcal S_t=\langle\mathcal P_{a_t}, \mathcal S_{t-1}\cap C(\mathcal P_{a_t})\rangle$.

\begin{thm}[Three-step protocol]
\label{thm:SMT3}
After the initial transient, the three-step protocol has distance $d=L$.
For odd $L$, it encodes one static logical qubit.
For $L=4n+2$, it encodes two static logical qubits.
For $L=4n$, it encodes one static and one partially dynamical logical qubit.
\end{thm}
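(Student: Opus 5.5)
The proof has three parts: compute the steady instantaneous stabilizer groups (ISGs) and hence $k$, build logical representatives and classify them as static or dynamical, and bound the distance with Lemma~\ref{lem:SMdisjoint}.

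\emph{Step 1: steady ISGs.} We iterate $\mathcal S_t=\langle\mathcal P_{a_t},\mathcal S_{t-1}\cap C(\mathcal P_{a_t})\rangle$ starting from $\mathcal S_0=\mathcal P_x$, using Proposition~\ref{prop:SMcentralizer} together with its cyclic images.

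For odd $L$ this gives $\mathcal S_1=\langle\mathcal P_y,\mathcal D_x\rangle$. Since all double loops lie in $\mathcal S$, they commute with every parallelogram, and the $\mathcal P_y$ part contributes only $\mathcal D_y$ to the next round. Hence $\mathcal S_2=\langle\mathcal P_z,\mathcal D_x,\mathcal D_y\rangle$. The next round gives $\langle\mathcal P_x,\mathcal D_y,\mathcal D_z\rangle$, and from then on the ISGs are cyclic images of one another.

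For even $L$, Eq.~\eqref{eq:SMevencentralizer} gives $\mathcal S_1=\langle\mathcal P_y,\mathcal D_x,\mathcal D_z\rangle$, and the same reasoning shows that every subsequent round has the form $\langle\mathcal P_a,\mathcal D_b,\mathcal D_c\rangle$.

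The ranks follow from Proposition~\ref{prop:SMparallelogram}. For odd $L$ the rank is $(L-1)^2+2(L-1)=L^2-1$, so $k=1$. For even $L$ it is $(L-1)^2+2(L-1)-1=L^2-2$, so $k=2$.

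\emph{Step 2: logical operators.} The pair $\Xi_x,\Xi_z$ commutes with every triangle, hence with all $\mathcal P_a$. The two loops anticommute, and neither lies in any $\mathcal S_t\subset\mathcal G$. They therefore form a static logical qubit for all $L$, which settles the odd case.

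For even $L$, take $\widetilde Z_2=H_L$. It commutes with $\mathcal G$ and is absent from each $\mathcal S_t$ by Eq.~\eqref{eq:SMHLindependentPx} and its cyclic images. It also commutes with $\Xi_x$ and $\Xi_z$, since $H_L\in\mathcal S$.

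For $L=4n+2$, I would verify directly that $\Omega_x=\prod_{i\,\rm even}\Xi_{h,x}^i$ overlaps every $y$- and $z$-parallelogram on an even number of sites. The key point is that a parallelogram spans two adjacent horizontal lines, exactly one of which is even. I would then check that $\Omega_x$ meets the $y$- and $z$-loops of $H_L$ in $L/2$ sites each, for a total of $L$ sites. Accounting for the intersection sites carefully should give $\{\Omega_x,H_L\}=0$ precisely when $L/2$ is odd. I would also confirm that $\Omega_x$ commutes with $\Xi_x$ and $\Xi_z$, adjusting by $\Xi_z$ if it does not.

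For $L=4n$, I would first exhibit $H_L$ as a product of $\mathcal P_x,\mathcal P_y,\mathcal P_z$ elements. Concretely, write $H_L$ via $F_xF_yF_z$ and the even-line products such as $E_y\in\mathcal P_y$, using the relation Eq.~\eqref{eq:SMalllooprelation}. Once $H_L\in\mathcal M_{\rm cyc}$, any operator commuting with all three families commutes with $H_L$, so no static conjugate exists. It remains to produce round-dependent conjugates. Take $\Xi_{h,x}\in C(\mathcal S_0)$ and update it by multiplying by $F_y$ and then $F_z$, as in Eq.~\eqref{eq:X2_orbit}. At each round I check that the representative commutes with $\mathcal S_t$ and anticommutes with $H_L$; multiplication by an element of the measured family preserves the logical class.

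\emph{Step 3: distance.} I would use the round-adapted basis in Eq.~\eqref{eq:T3_distance_logicals}. In each round, each basis loop has $L$ disjoint translates, related by double loops $\mathcal D_b,\mathcal D_c$ in $\mathcal S_t$ or by even-line products of $\mathcal P_a$. For two qubits, every nontrivial logical operator anticommutes with at least one element of the basis. Lemma~\ref{lem:SMdisjoint} then gives weight at least $L$, and the weight-$L$ loops give $d\le L$.

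The main obstacle is the $L\bmod4$ bookkeeping. This means showing $H_L\in\mathcal M_{\rm cyc}$ exactly when $L=4n$, and getting the anticommutation count of $\Omega_x$ right for $L=4n+2$; both require careful counting of sites on the even-line sublattices.
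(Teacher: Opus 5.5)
Your ISG recursion, rank counts, static pair $(\Xi_x,\Xi_z)$, choice $\widetilde Z_2=H_L$, use of $\Omega_x$ for $L=4n+2$, the $F_a$ orbit, and the round-adapted distance argument all match the paper. The gap is the step that makes the $L=4n$ qubit partially dynamical, namely Lemma~\ref{lem:SMHLinM} ($H_L\in\mathcal M_{\rm cyc}$). Your proposed route to it cannot work for any even $L$.

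Everything you plan to multiply is a product of complete straight lines: $F_x,F_y,F_z$, the double loops, $E_y$, the even-line products, and the all-loop relation Eq.~\eqref{eq:SMalllooprelation}. Such products never produce $H_L$.

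To see this, note that $\Xi_{h,x}$ commutes with all of $\mathcal P_x$ and $\mathcal P_y$. It also commutes with every product of complete $z$-lines that lies in $\mathcal P_z$. Suppose such a product contains $n_h,n_v,n_d$ horizontal, vertical and diagonal $z$-lines. For even $L$, its overlap parities with $\Xi_{v,x}$, $\Xi_{d,x}$ and $\Xi_{h,x}$ are $n_h+n_d$, $n_h+n_v$ and $n_v+n_d$.
\begin{itemize}
\item The first two vanish, because $\Xi_{d,x}=\Xi_x$ and $\Xi_{v,x}$ both commute with $\mathcal P_z$. The latter is implicit in the validity of $\widetilde X_2^{(2)}=\Xi_{v,x}\Xi_{d,x}$.
\item Since the three parities sum to zero, the third vanishes too.
\end{itemize}
Because $\{\Xi_{h,x},H_L\}=0$, $H_L$ is not in the group generated by $\mathcal P_x$, $\mathcal P_y$ and the line-type elements of $\mathcal P_z$. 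So the $L\bmod 4$ dependence is not a matter of careful site counting on even-line sublattices. Any expression for $H_L$ must use genuinely local $z$-parallelograms, which are exactly the ones $\Xi_{h,x}$ fails to commute with.

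The missing idea is the paper's local tiling (Fig.~\ref{figS:expressHL}). The products $P_xP_y$, $P_yP_z$ and $P_zP_x$ of suitably adjacent parallelograms equal, up to phase, products of two neighboring up-triangles. By Eq.~\eqref{eq:SMHLgauge}, $H_L$ is the product of all $L(L+1)/2$ up-triangles in the region $0\le n\le m\le L-1$. For $L=4n$ this region is partitioned into such trapezoids; note that for even $L$, $L(L+1)/2$ is even only when $4\mid L$. This gives $H_L\in\mathcal M_{\rm cyc}$. Only this direction is needed, since $\Omega_x$ already settles $L=4n+2$.

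There are also two smaller slips:
\begin{itemize}
\item A $z$-type parallelogram does not sit on two adjacent horizontal lines. It meets three consecutive ones in $1,2,1$ sites. Its overlap with $\Omega_x$ is still even, because the outer two lines have equal parity.
\item Your tentative count ``$L/2$ sites each, total $L$'' would predict $[\Omega_x,H_L]=0$. The clean argument is that each horizontal $x$-line anticommutes with $H_L$, so $\Omega_xH_L=(-1)^{L/2}H_L\Omega_x$.
\end{itemize}
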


\subsection{Odd $L$}
\label{sec:SMT3odd}

We first determine the ISG evolution.
Starting from a trivial ISG, the first measurement gives $\mathcal S_0=\mathcal P_x$.
For odd $L$, Proposition~\ref{prop:SMcentralizer} gives $\mathcal P_x\cap C(\mathcal P_y)=\mathcal D_x$, and therefore
\begin{equation}
    \mathcal S_1=\langle\mathcal P_y,\mathcal D_x\rangle .
\label{eq:SModdS1}
\end{equation}
Since $\mathcal P_y$ has rank $(L-1)^2$ and already contains the $L-1$ independent $y$-type double loops, while $\mathcal D_x$ is independent of $\mathcal P_y$ for odd $L$, we obtain $\operatorname{rank}\mathcal S_1=L^2-L$.

At the next round, all elements of $\mathcal D_x$ remain in the ISG because the double loops belong to $Z(\mathcal G)$ and hence commute with $\mathcal P_z$.
Proposition~\ref{prop:SMcentralizer} also gives $\mathcal P_y\cap C(\mathcal P_z)=\mathcal D_y$.
Applying the ISG update rule therefore yields
\begin{equation}
    \mathcal S_2=
    \langle\mathcal P_z,\mathcal D_x,\mathcal D_y\rangle .
\label{eq:SModdS2}
\end{equation}
By Proposition~\ref{prop:SMparallelogram}, for odd $L$ the group $\mathcal P_z$ together with the $x$- and $y$-type double loops has rank $L^2-1$.
Hence $\operatorname{rank}\mathcal S_2=L^2-1$.

The following $\mathcal P_x$ round gives $\mathcal S_3=\langle\mathcal P_x,\mathcal D_y,\mathcal D_z\rangle$, and the subsequent ISGs are obtained by cyclic permutation of $x,y,z$.
Thus every steady-round ISG has rank $L^2-1$ and therefore encodes one logical qubit.

A convenient logical Pauli pair is
\begin{equation}
    (\widetilde X_1,\widetilde Z_1)=(\Xi_x,\Xi_z).
\label{eq:SModdlogical}
\end{equation}
Both operators belong to the centralizer of the static gauge group $\mathcal G$ and therefore commute with every measurement family $\mathcal P_a\subset\mathcal G$.
Since they anticommute with one another, they provide a time-independent logical Pauli pair throughout the entire cycle.
This encoded qubit is inherited directly from the static subsystem code.
The distance $d=L$ will be proved in Sec.~\ref{sec:SMT3distance}.

\subsection{Even $L$: common ISG structure}
\label{sec:SMT3even}

We next consider even $L$.
The first measurement again gives $\mathcal S_0=\mathcal P_x$.
At the second round, Proposition~\ref{prop:SMcentralizer} gives
\begin{equation}
    \mathcal S_1=\langle\mathcal P_y,\mathcal D_x,\mathcal D_z\rangle .
\label{eq:SMevenS1}
\end{equation}
By Proposition~\ref{prop:SMparallelogram}, this group has rank $L^2-2$.
Thus, the two-logical-qubit structure is already reached after the first two measurements.

Upon measuring $\mathcal P_z$, the double loops already present in $\mathcal S_1$ remain because they belong to $Z(\mathcal G)$.
Applying Proposition~\ref{prop:SMcentralizer} to the successive $\mathcal P_y\rightarrow\mathcal P_z$ measurements gives
\begin{equation}
    \mathcal S_2=\langle\mathcal P_z,\mathcal D_x,\mathcal D_y\rangle .
\label{eq:SMevenS2}
\end{equation}
The following $\mathcal P_x$ round gives $\mathcal S_3=\langle\mathcal P_x,\mathcal D_y,\mathcal D_z\rangle$, and the subsequent ISGs are obtained by cyclic permutation of $x,y,z$.
By Proposition~\ref{prop:SMparallelogram}, every steady-round ISG has rank $L^2-2$ and therefore encodes two logical qubits.

One of these logical qubits is again inherited directly from the static subsystem code.
We choose
\begin{equation}
    (\widetilde X_1,\widetilde Z_1)=(\Xi_x,\Xi_z).
\label{eq:SMevenlogical1}
\end{equation}
As in the odd-$L$ case, both operators commute with the full gauge group and hence with every measurement family, so this logical qubit remains static throughout the cycle.

The second logical qubit originates from the even-$L$ stabilizer $H_L$ of the static subsystem code.
Since $H_L\in Z(\mathcal G)$, it commutes with every steady ISG.
On the other hand, Proposition~\ref{prop:SMparallelogram} shows that adjoining $H_L$ to $\langle\mathcal P_a,\mathcal D_b,\mathcal D_c\rangle$ raises its rank from $L^2-2$ to $L^2-1$.
Therefore $H_L$ is not contained in the steady ISG and represents a nontrivial logical Pauli operator.
We choose
\begin{equation}
    \widetilde Z_2=H_L.
\label{eq:SMevenZ2}
\end{equation}
The existence of a time-independent conjugate $\widetilde X_2$ depends on $L\bmod 4$.

\subsection{$L=4n+2$: two static logical qubits}
\label{sec:SM4n2}

We first consider $L=4n+2$.
Define
\begin{equation}
    \Omega_x=\prod_{j\ {\rm even}}\Xi_{h,x}^j .
\label{eq:SMOmegax}
\end{equation}
Every parallelogram has an even number of anticommuting overlaps with $\Omega_x$, and hence $[\Omega_x,\mathcal P_a]=0$ for $a=x,y,z$.
Thus $\Omega_x$ commutes with every ISG throughout the three-step cycle.

A single horizontal $x$ loop has odd intersection parity with $H_L$.
Since $\Omega_x$ contains $L/2$ such loops, their commutation relation is $\Omega_xH_L=(-1)^{L/2}H_L\Omega_x$.
For $L=4n+2$, $L/2=2n+1$ is odd, and therefore
\begin{equation}
    \{\Omega_x,H_L\}=0 .
\label{eq:SMOmegaanti}
\end{equation}
Moreover, $\Omega_x$ commutes with the first logical pair $(\widetilde X_1,\widetilde Z_1)$.
Hence
\begin{equation}
    (\widetilde X_2,\widetilde Z_2)=(\Omega_x,H_L)
\label{eq:SM4n2pair}
\end{equation}
defines a second time-independent logical Pauli pair.

Therefore, for $L=4n+2$, both encoded qubits admit static logical representatives throughout the three-step cycle.

\begin{figure}[t]
    \centering
    \includegraphics[width=0.75\columnwidth]{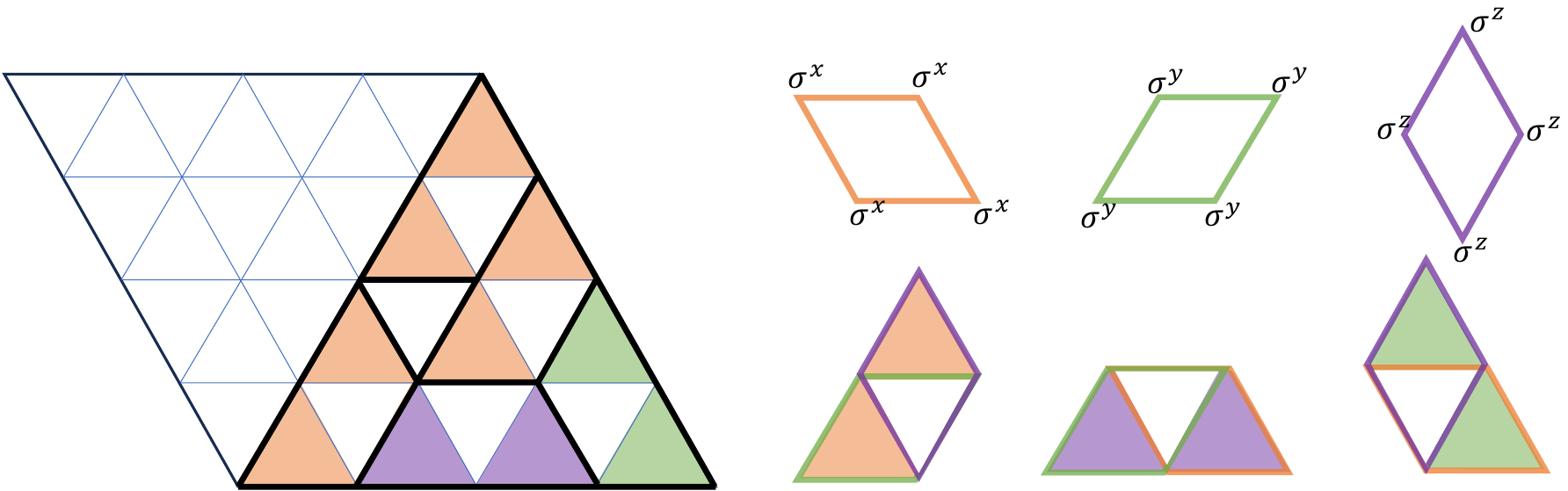}
    \caption{
    Decomposition used to express $H_L$ in terms of the three measurement families for $L=4n$.
    Left: the triangular region of up-triangle gauge generators whose product gives $H_L$.
    The up triangles are partitioned into the three types of trapezoidal blocks indicated by different colors.
    Right: the three local identities obtained from the products $P_xP_y$, $P_yP_z$, and $P_zP_x$.
    Each product gives a trapezoid containing two neighboring up-triangle generators.
    Multiplying these identities over all trapezoids in the left panel reproduces the product of all up triangles, and hence expresses $H_L$ as an element of $\langle\mathcal P_x,\mathcal P_y,\mathcal P_z\rangle$.}
    \label{figS:expressHL}
\end{figure}

\subsection{$L=4n$: obstruction to a static conjugate}
\label{sec:SM4n}

For $L=4n$, the operator $\Omega_x$ introduced above contains $L/2=2n$ horizontal $x$ loops and therefore commutes with $H_L$.
Thus it cannot serve as a conjugate of $\widetilde Z_2=H_L$.
We now show that this obstruction is unavoidable.

Let
\begin{equation}
    \mathcal M_{\rm cyc}=\langle\mathcal P_x,\mathcal P_y,\mathcal P_z\rangle
\label{eq:SMMcyc}
\end{equation}
be the group generated by all measurements appearing in one period.

\begin{lemma}
\label{lem:SMHLinM}
For $L=4n$, $H_L\in\mathcal M_{\rm cyc}$.
\end{lemma}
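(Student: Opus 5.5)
The plan follows the decomposition of Fig.~\ref{figS:expressHL}. The starting point is Eq.~\eqref{eq:SMHLgauge}, which writes $H_L$, up to phase, as the product of all up-triangle generators $A_{m,n}$ over the triangular region $0\le n\le m\le L-1$. The task is to regroup this product into pairs of up triangles, each pair expressible through the measurement families.

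The first step is to establish the three local identities on the right of Fig.~\ref{figS:expressHL}. Using the index convention of Eq.~\eqref{eq:SMABindex}, a parallelogram $P_a$ is a product of one $A$ and one $B$ triangle. Hence a product $P_aP_b$ of an $a$-type and a $b$-type parallelogram, suitably placed so that the two $B$ factors cancel, equals (up to phase) a trapezoid consisting of two neighboring up triangles $A_{m,n}A_{m',n'}$. I would verify this directly for the three cyclic pairs $(x,y)$, $(y,z)$, $(z,x)$. Each such identity is a finite Pauli-string check on at most six sites: the overlapping sites must reduce to $I$ or the correct Pauli of the $A$ triangles.

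The second step, which I expect to be the main obstacle, is combinatorial. I need to tile the $L(L+1)/2$ up triangles of the triangular region by disjoint trapezoids of the three orientations, each trapezoid being two adjacent up triangles along one of the three lattice directions. This requires $L(L+1)/2$ to be even, which holds precisely when $L\equiv0$ or $3 \pmod 4$. That is consistent with $L=4n$, and with $L=4n+2$ failing. For $L=4n$, I would first try a row-by-row tiling. Rows $m$ of the region contain $m+1$ up triangles, so odd-length rows can be paired horizontally by one trapezoid type. The leftover triangles of even-length rows are then paired vertically or diagonally across consecutive rows, using the other two types. The pattern should repeat with period $4$ in $m$, and $L=4n$ is exactly what closes it at the last row; this periodicity is the step to check carefully.

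Multiplying the local identities over all trapezoids of the tiling then reproduces $\prod_{m,n}A_{m,n}=H_L$ up to phase. Every factor on the left-hand side lies in some $\mathcal P_a$, so $H_L\in\langle\mathcal P_x,\mathcal P_y,\mathcal P_z\rangle=\mathcal M_{\rm cyc}$.

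As a consistency check on the mod-4 condition, I would confirm that for $L=4n+2$ the conclusion must fail. There $\Omega_x$ commutes with all $\mathcal P_a$ but anticommutes with $H_L$ by Eq.~\eqref{eq:SMOmegaanti}, so $H_L\notin\mathcal M_{\rm cyc}$. This supports that the tiling parity argument is the essential ingredient.
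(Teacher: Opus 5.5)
Your proposal is correct and follows the paper's proof essentially step for step. You write $H_L$ as the product of the $L(L+1)/2$ up triangles in $0\le n\le m\le L-1$, realize each pair of up triangles $A,A'$ sharing a down triangle $B$ as $P_aP_b\propto(AB)(A'B)\propto AA'$ for one of the three cyclic pairs $(a,b)$, and partition the region into such trapezoids when $L=4n$.

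Your row-by-row sketch has odd and even swapped: the even-length rows pair off internally, and each odd-length row leaves one triangle over. The partition you need is immediate anyway. A boustrophedon path through rows $m=0,1,\dots,L-1$ visits all $L(L+1)/2$ up triangles using only the three allowed adjacencies. Since this number is even for $L=4n$, taking every other edge of the path gives the required tiling.
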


\begin{proof}
Recall from Sec.~\ref{sec:SMstabilizer} that $H_L$ is generated by the product of all up-triangle gauge generators in the triangular region shown in the left panel of Fig.~\ref{figS:expressHL}.

The three local identities shown on the right of Fig.~\ref{figS:expressHL} are obtained by multiplying neighboring parallelograms of two different Pauli types.
Specifically, the products $P_xP_y$, $P_yP_z$, and $P_zP_x$ give the three possible trapezoidal blocks shown in the figure.
Each trapezoid is equal, up to an irrelevant Pauli phase, to the product of the two up-triangle gauge generators contained in that trapezoid.

For $L=4n$, the complete set of up triangles in the triangular region can be partitioned into these trapezoidal blocks, as indicated by the three colors in the left panel.
Multiplying all of the corresponding local identities therefore gives, on one side, the product of all up-triangle generators in the region, which is $H_L$, and, on the other side, a product solely of parallelogram operators from $\mathcal P_x$, $\mathcal P_y$, and $\mathcal P_z$.
Hence
\begin{equation}
    H_L\in\langle\mathcal P_x,\mathcal P_y,\mathcal P_z\rangle=\mathcal M_{\rm cyc}.
\label{eq:SMHLinM}
\end{equation}
\end{proof}

\begin{thm}[Obstruction to a static conjugate]
\label{thm:SMstaticobstruction}
For $L=4n$, no time-independent Pauli operator can both commute with every measurement in the three-step cycle and anticommute with $H_L$.
\end{thm}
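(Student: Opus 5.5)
The plan is a short argument by contradiction, resting on Lemma~\ref{lem:SMHLinM}. Suppose $O$ is a Pauli operator that commutes with every element of $\mathcal P_x$, $\mathcal P_y$, and $\mathcal P_z$; this is what ``time-independent'' means here. The Pauli centralizer of a set of Pauli operators is a group: if $O$ commutes with $g_1$ and $g_2$, it commutes with $g_1g_2$. Hence $O$ lies in $C(\mathcal M_{\rm cyc})$, where $\mathcal M_{\rm cyc}=\langle\mathcal P_x,\mathcal P_y,\mathcal P_z\rangle$. By Lemma~\ref{lem:SMHLinM}, for $L=4n$ we have $H_L\in\mathcal M_{\rm cyc}$. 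It follows that $[O,H_L]=0$, so $O$ cannot anticommute with $H_L$.

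The steps, in order, are these. First, I would formalize the hypothesis. A representative that stays the same throughout the cycle must commute with the ISG at every round, and in particular with each newly measured family $\mathcal P_{a_t}$. Otherwise the measurement at that round would change the representative, since an operator anticommuting with a measured check is not preserved as a logical operator of the post-measurement ISG. Second, I would observe that commuting with all three families is equivalent to lying in the centralizer of the group they generate. Third, I would invoke Lemma~\ref{lem:SMHLinM} and conclude. The phases up to which the identity $H_L\in\mathcal M_{\rm cyc}$ holds do not affect commutation relations, so they can be ignored.

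All of the real content sits in Lemma~\ref{lem:SMHLinM}, which the paper has already established; I would take it as given. The only remaining point is the first step: I must justify that ``time-independent logical operator'' forces commutation with every measured family, and not just with the instantaneous stabilizer groups $\mathcal S_t$. If some $\mathcal P_{a_t}$ contained an element anticommuting with the candidate, measuring it would randomize, or equivalently update, the logical representative. So such an operator cannot serve as the same physical conjugate across that round. This is exactly the criterion stated in the theorem, so the theorem follows directly once it is phrased in this way.
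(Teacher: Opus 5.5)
Your argument is correct and is essentially the paper's proof: an operator commuting with every element of $\mathcal P_x$, $\mathcal P_y$, $\mathcal P_z$ lies in the centralizer of $\mathcal M_{\rm cyc}$, and Lemma~\ref{lem:SMHLinM} ($H_L\in\mathcal M_{\rm cyc}$ for $L=4n$) then forces it to commute with $H_L$. Your extra remark is also sound: a static representative must commute with each $\mathcal S_t\supseteq\mathcal P_{a_t}$, which links time independence to the hypothesis as stated in the theorem.
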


\begin{proof}
A time-independent logical representative must commute with every measurement operator and hence with every element of $\mathcal M_{\rm cyc}$.
By Lemma~\ref{lem:SMHLinM}, $H_L\in\mathcal M_{\rm cyc}$, so any such operator must commute with $H_L$.
This contradicts the anticommutation relation required of a logical Pauli conjugate to $\widetilde Z_2=H_L$.
\end{proof}

Thus the second logical qubit cannot admit a time-independent Pauli pair for $L=4n$.
Since $\widetilde Z_2=H_L$ remains static, its conjugate must change during the measurement cycle.

\subsection{Explicit dynamical logical operator}
\label{sec:SMdynamic}

We now construct an explicit conjugate of $\widetilde Z_2=H_L$ at each measurement round.
For $a=x,y,z$, define
\begin{equation}
    F_a=\Xi_{h,a}\Xi_{v,a}\Xi_{d,a}.
\label{eq:SMFa}
\end{equation}
Figure~\ref{figS:transferF} illustrates the construction for $a=y$.
Multiplying the shaded $y$-type parallelograms in the triangular region cancels all interior Pauli operators pairwise and leaves only the three noncontractible boundary loops $\Xi_{h,y}$, $\Xi_{v,y}$, and $\Xi_{d,y}$, namely $F_y$.
The same construction applies cyclically to $a=x$ and $a=z$.
Hence $F_a\in\mathcal P_a$.

\begin{figure}[t]
    \centering
    \includegraphics[width=0.55\columnwidth]{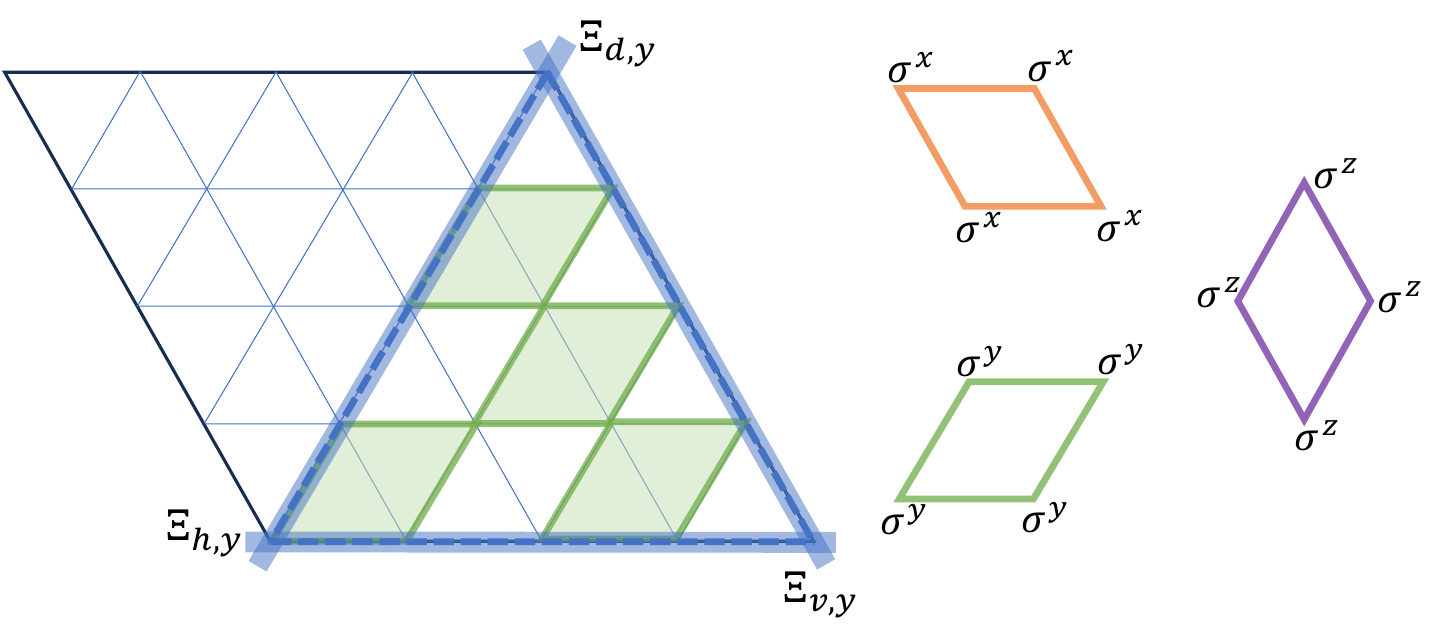}
    \caption{
    Construction of $F_a=\Xi_{h,a}\Xi_{v,a}\Xi_{d,a}$ from the measurement family $\mathcal P_a$.
    The figure shows the case $a=y$.
    Multiplying the shaded $y$-type parallelograms in the triangular region cancels all interior Pauli operators pairwise and leaves only the three noncontractible boundary loops, giving $F_y$.
    The constructions of $F_x$ and $F_z$ are obtained analogously by cyclic permutation.
    }
    \label{figS:transferF}
\end{figure}

A convenient sequence of representatives is
\begin{align}
    \widetilde X_2^{(0)}&=\Xi_{h,x},\nonumber\\
    \widetilde X_2^{(1)}&=\Xi_{h,z}\Xi_{v,y}\Xi_{d,y},\nonumber\\
    \widetilde X_2^{(2)}&=\Xi_{v,x}\Xi_{d,x}.
\label{eq:SMDynamicX}
\end{align}
To make the measurement times explicit, we write their evolution as
\begin{equation}
    \widetilde X_2^{(0)}
    \xrightarrow[\times F_y]{\mathcal P_y}
    \widetilde X_2^{(1)}
    \xrightarrow[\times F_z]{\mathcal P_z}
    \widetilde X_2^{(2)}
    \xrightarrow[\times F_x]{\mathcal P_x}
    \widetilde X_2^{(0)} .
\label{eq:SMDynamicOrbit}
\end{equation}
Each arrow denotes the corresponding measurement round, while the label below the arrow gives the measured operator relating the two representatives.

The representatives on the two sides of each arrow are both valid immediately after that measurement.
For example, $\widetilde X_2^{(0)}$ is valid before the $\mathcal P_y$ measurement and remains valid after it.
Moreover, $\widetilde X_2^{(1)}=F_y\widetilde X_2^{(0)}$ up to an irrelevant Pauli phase.
Since $F_y\in\mathcal P_y$ is fixed by the new measurement record, $\widetilde X_2^{(0)}$ and $\widetilde X_2^{(1)}$ are equivalent representatives of the same logical Pauli operator immediately after the $\mathcal P_y$ measurement.
We may then use $\widetilde X_2^{(1)}$ as the representative through the following $\mathcal P_z$ measurement.
The same argument applies cyclically to the other two arrows.

At each round, the corresponding representative commutes with the instantaneous stabilizer group and anticommutes with the fixed logical operator $H_L$,
\begin{equation}
    [\widetilde X_2^{(t)},\mathcal S_t]=0,\qquad\{\widetilde X_2^{(t)},H_L\}=0.
\label{eq:SMDynamicCommutation}
\end{equation}
Therefore, it represents a valid conjugate logical Pauli operator at that round.
Moreover, each update multiplies the representative by $F_a\in\mathcal P_a$, where $\mathcal P_a$ is the family measured in the new round.
Since the eigenvalue of $F_a$ is fixed by the corresponding measurement record, the logical information is preserved while its physical Pauli representative evolves through the three-step cycle.

Together with Theorem~\ref{thm:SMstaticobstruction}, this indicates that the second logical qubit for $L=4n$ is partially dynamical: $\widetilde Z_2=H_L$ admits a time-independent representative, whereas no time-independent representative exists for its conjugate.

It is also useful to track the third Pauli operator of the second logical qubit. 
Define
\begin{equation}
\widetilde Y_2^{(t)}=i\widetilde X_2^{(t)}\widetilde Z_2,
\qquad
\widetilde Z_2=H_L .
\end{equation}
Since $H_L$ is time-independent, $\widetilde Y_2^{(t)}$ obeys the same update rule as $\widetilde X_2^{(t)}$. 
Up to irrelevant Pauli phases, a convenient choice is
\begin{align}
\widetilde Y_2^{(0)}&=\Xi_{h,y}\Xi_{v,y}\Xi_{d,x},\nonumber\\
\widetilde Y_2^{(1)}&=\Xi_{d,z},\nonumber\\
\widetilde Y_2^{(2)}&=\Xi_{h,z}\Xi_{v,z}.
\end{align}
These representatives satisfy
\begin{equation}
\widetilde Y_2^{(0)}\xrightarrow[\times F_y]{\mathcal P_y}
\widetilde Y_2^{(1)}\xrightarrow[\times F_z]{\mathcal P_z}
\widetilde Y_2^{(2)}\xrightarrow[\times F_x]{\mathcal P_x}
\widetilde Y_2^{(0)} .
\end{equation}
Thus the continuously tracked logical frame may be chosen as $\widetilde X_1,\widetilde Z_1,\widetilde X_2^{(t)},\widetilde Z_2$.
The first logical qubit and $\widetilde Z_2$ remain fixed, while both $\widetilde X_2^{(t)}$ and $\widetilde Y_2^{(t)}$ are updated by multiplication with an operator generated by the newly measured family. 
Because the eigenvalue of each $F_a\in\mathcal P_a$ is fixed by the corresponding measurement record, the representatives on the two sides of each update carry the same logical Pauli information.

\subsection{Round-adapted logical basis and distance}
\label{sec:SMT3distance}

For the distance proof, it is convenient to use weight-$L$ logical representatives adapted to the instantaneous ISG.
For either even size class, we choose
\begin{center}
\begin{tabular}{c c c}
\toprule
measured family &
$(\widehat X_1,\widehat Z_1)$ &
$(\widehat X_2,\widehat Z_2)$\\
\midrule
$\mathcal P_x$ &
$(\Xi_{v,y},\Xi_{h,z})$ &
$(\Xi_{v,z},\Xi_{h,y})$\\
$\mathcal P_y$ &
$(\Xi_{h,z},\Xi_{d,x})$ &
$(\Xi_{h,x},\Xi_{d,z})$\\
$\mathcal P_z$ &
$(\Xi_{d,x},\Xi_{v,y})$ &
$(\Xi_{d,y},\Xi_{v,x})$\\
\bottomrule
\end{tabular}
\end{center}
The line operators in the table are chosen independently at each measurement round. 
To distinguish this round-adapted basis from the continuously tracked logical frame above, we denote them by $(\widehat X_1,\widehat Z_1)$ and $(\widehat X_2,\widehat Z_2)$. Their relation to the continuously tracked operators can be written explicitly.

At the $\mathcal P_x$ round,
\begin{equation}
    \widehat X_1=\Xi_{v,y}=\widetilde X_1\widetilde Z_1\widetilde Z_2,\qquad \widehat Z_1=\Xi_{h,z}=\widetilde Z_1,\qquad \widehat X_2=\Xi_{v,z}= F_x\widetilde Z_1\widetilde Y_2^{(0)},\qquad \widehat Z_2=\Xi_{h,y}\simeq\widetilde Z_1\widetilde X_2^{(0)} .
\end{equation}
At the $\mathcal P_y$ round,
\begin{equation}
\widehat X_1=\Xi_{h,z}=\widetilde Z_1,\qquad
\widehat Z_1=\Xi_{d,x}=\widetilde X_1,\qquad
\widehat X_2=\Xi_{h,x}= F_y\widetilde X_2^{(1)},\qquad
\widehat Z_2=\Xi_{d,z}=\widetilde Y_2^{(1)} .
\end{equation}
At the $\mathcal P_z$ round,
\begin{equation}
\widehat X_1=\Xi_{d,x}=\widetilde X_1,\qquad
\widehat Z_1=\Xi_{v,y}=\widetilde X_1\widetilde Z_1\widetilde Z_2,\qquad
\widehat X_2=\Xi_{d,y}= F_z\widetilde X_1\widetilde Y_2^{(2)},\qquad
\widehat Z_2=\Xi_{v,x}=\widetilde X_1\widetilde X_2^{(2)} .
\end{equation}
Equivalently, the continuously tracked second logical qubit can be reconstructed directly from the round-adapted basis. 
Since $F_a\in\mathcal P_a$ is generated by the measurement family at the corresponding round, its eigenvalue is known from the measurement record. 
These identities give an explicit conversion between the round-adapted weight-$L$ basis and the continuously tracked logical frame.
The individual round-adapted line operators therefore need not themselves follow a fixed logical Pauli class through the cycle. 
Their logical frame can change with the round and mix the two encoded qubits. 
This does not affect the distance argument, which requires only a complete instantaneous logical basis with sufficiently many disjoint equivalent representatives.

We now verify that the line operators appearing in the table admit the required translated representatives.
For this purpose, define the orientation-resolved neighboring-loop pair
\begin{equation}
    D_{\alpha,a}^j=\Xi_{\alpha,a}^j\Xi_{\alpha,a}^{j+1},
    \qquad \alpha=h,v,d .
\label{eq:SMgeneralD}
\end{equation}
The double loops introduced previously are $D_x^j=D_{d,x}^j$, $D_y^j=D_{v,y}^j$, and $D_z^j=D_{h,z}^j$.

For even $L$, the additional orientation-resolved double loops needed below are also generated by the corresponding measurement family.
For each orientation, the product of all even-position $P_a$ parallelograms along the corresponding translated sequence gives the neighboring-loop pair $D_{\alpha,a}^j$.
In particular, $D_{v,x}^j,D_{h,x}^j\in\mathcal P_x$, $D_{h,y}^j,D_{d,y}^j\in\mathcal P_y$, and $D_{d,z}^j,D_{v,z}^j\in\mathcal P_z$.

The remaining double loops follow by combining these operators with the natural double-loop stabilizers already contained in the corresponding ISG.
Up to irrelevant Pauli phases,
\begin{align}
\mathcal P_x:\quad&
D_{v,z}^j=D_{v,x}^jD_y^j,\qquad
D_{h,y}^j=D_{h,x}^jD_z^j,\nonumber\\
\mathcal P_y:\quad&
D_{h,x}^j=D_{h,y}^jD_z^j,\qquad
D_{d,z}^j=D_{d,y}^jD_x^j,\nonumber\\
\mathcal P_z:\quad&
D_{d,y}^j=D_{d,z}^jD_x^j,\qquad
D_{v,x}^j=D_{v,z}^jD_y^j .
\label{eq:SMrounddoubleloops}
\end{align}
Therefore the instantaneous ISG contains the neighboring-loop pair associated with every line operator appearing in the corresponding row of the table.
Multiplying a line operator successively by these double-loop stabilizers translates it through all $L$ parallel positions and produces $L$ mutually disjoint equivalent representatives.

We can now prove the distance.
Let $P$ be any nontrivial logical Pauli operator at a steady round.
Because the four operators in the corresponding row form a complete symplectic basis for the two encoded qubits, $P$ anticommutes with at least one of them.
That logical generator has $L$ mutually disjoint equivalent representatives, so Lemma~\ref{lem:SMdisjoint} gives $\operatorname{wt}(P)\ge L$.
Therefore $d\ge L$.
Conversely, the table contains nontrivial logical representatives of weight $L$, so $d\le L$.
Hence
\begin{equation}
    d=L.
\label{eq:SMT3distance}
\end{equation}

For odd $L$, the same argument applies to the single logical qubit.
The steady ISG contains the appropriate natural double-loop stabilizers generating $L$ mutually disjoint representatives of $\widetilde X_1=\Xi_x$ and $\widetilde Z_1=\Xi_z$.
Therefore the odd-$L$ three-step code also has distance $d=L$.

This completes the proof of Theorem~\ref{thm:SMT3}.

\section{Two-step protocol for even $L$}
\label{sec:SMT2}

We next consider the two-step measurement cycle
\begin{equation}
    \mathcal P_x\longrightarrow\mathcal P_y \longrightarrow\mathcal P_x\longrightarrow\cdots ,
\label{eq:SMT2}
\end{equation}
with period $T=2$.

\begin{proposition}[Two-step protocol]
\label{prop:SMT2}
For even $L$, the two-step protocol encodes two static logical qubits with distance $d=L$ after the initial transient.
\end{proposition}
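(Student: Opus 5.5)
The plan has four steps: first the instantaneous stabilizer groups (ISGs), then the logical pairs, then static representatives for the second qubit, and finally the distance.

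\emph{ISG cycle.} Start from $\mathcal S_0=\mathcal P_x$. Proposition~\ref{prop:SMcentralizer} with even $L$ gives $\mathcal S_1=\langle\mathcal P_y,\mathcal D_x,\mathcal D_z\rangle$. At the next $\mathcal P_x$ round, the double loops $\mathcal D_x,\mathcal D_z\subset Z(\mathcal G)$ survive, and $\mathcal D_x\subset\mathcal P_x$ anyway. The surviving part of $\mathcal P_y$ is $\mathcal P_y\cap C(\mathcal P_x)$. I would compute it by rerunning the binary-function argument of Proposition~\ref{prop:SMcentralizer} with $x$ and $y$ exchanged, with the appropriate lattice directions for this ordered pair. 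For even $L$ this yields $\mathcal D_y$ together with a second even-line family. That family can be traded, modulo $\mathcal P_x$, for $\mathcal D_z$ via an identity of the type $E_yE_x=D_z$. The result is $\mathcal S_2=\langle\mathcal P_x,\mathcal D_y,\mathcal D_z\rangle$. Repeating the argument returns $\langle\mathcal P_y,\mathcal D_x,\mathcal D_z\rangle$ because $\mathcal D_x\subset\mathcal P_x\cap C(\mathcal P_y)$, so the cycle is periodic. Proposition~\ref{prop:SMparallelogram} gives both steady ISGs rank $L^2-2$, since $H_L$ is exactly the missing generator. Hence $k=2$. This step is the main obstacle, because the reversed centralizer is not literally a cyclic permutation of the one already proved: $(y,x)$ is not a cyclic successor pair. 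I will therefore redo the parity computation explicitly, rather than invoke symmetry.

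\emph{Logical pairs.} $\widetilde X_1=\Xi_x$ and $\widetilde Z_1=\Xi_z$ commute with $\mathcal G\supset\mathcal P_x,\mathcal P_y$ and anticommute, as in Sec.~\ref{sec:SMT3even}. $H_L\in Z(\mathcal G)$ commutes with both ISGs and lies outside them by Eq.~\eqref{eq:SMHLindependentPx} and its cyclic analogues, so $\widetilde Z_2=H_L$.

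\emph{Static conjugate of $H_L$.} For $L=4n+2$, take $\Omega_x$ as in Sec.~\ref{sec:SM4n2}. For $L=4n$, take $\widetilde X_2=\Xi_{h,x}$. It commutes with $\mathcal P_x$ trivially. Against a $y$-parallelogram, a horizontal line meets each parallelogram on $0$ or $2$ sites, giving even overlap. It anticommutes with $H_L$ because a single horizontal $x$ loop has odd intersection parity with $H_L$ (as used in Sec.~\ref{sec:SM4n2}), and it commutes with $\Xi_x$ and $\Xi_z$ after adjusting by $\widetilde X_1$ or $\widetilde Z_1$ if needed. I would also note as a consistency check that $H_L\notin\langle\mathcal P_x,\mathcal P_y\rangle$, which follows from the existence of this commuting anticommutant.

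\emph{Distance.} Reuse the $\mathcal P_x$ and $\mathcal P_y$ rows of the round-adapted table in Sec.~\ref{sec:SMT3distance}. I must recheck that the needed orientation-resolved double loops lie in the new ISGs. For the $\mathcal P_x$ round, Eq.~\eqref{eq:SMrounddoubleloops} uses only $\mathcal D_y$ and $\mathcal D_z$, which are present. For the $\mathcal P_y$ round it uses $\mathcal D_z$ and $\mathcal D_x$, which are present. Each table entry therefore has $L$ disjoint equivalent representatives, and Lemma~\ref{lem:SMdisjoint} with the completeness of the symplectic basis gives $d\ge L$. Weight-$L$ lines give $d\le L$.
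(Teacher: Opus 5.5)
Your proposal is correct. For the ISG cycle and the logical pairs it follows the paper closely: the same $\widetilde Z_2=H_L$, with $\Omega_x$ for $L=4n+2$ and $\Xi_{h,x}$ for $L=4n$. The paper only says the second $\mathcal P_x$ round ``similarly'' gives $\langle\mathcal P_x,\mathcal D_y,\mathcal D_z\rangle$. You are right that $\mathcal P_y\cap C(\mathcal P_x)$ is not a cyclic image of the proven case, and the recomputation goes through as you expect:
\begin{itemize}
\item Commuting with every $x$-parallelogram, whose edges run horizontally and vertically, forces a sum of horizontal and vertical $y$ lines.
\item Membership in $\mathcal P_y$ means even parity along every horizontal and every diagonal line. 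For even $L$ this forces an even number of each kind of line.
\end{itemize}
Hence $\mathcal P_y\cap C(\mathcal P_x)=\langle\mathcal D_y,E_y^1,\ldots,E_y^{L-1}\rangle$. Since $E_y^j\simeq E_x^jD_z^j$ with $E_x^j\in\mathcal P_x$, the $E_y$ family is redundant and the ISG is $\langle\mathcal P_x,\mathcal D_y,\mathcal D_z\rangle$.

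The genuine difference is in the distance step. The paper uses one fixed set, the $\mathcal P_x$ row $(\Xi_{v,y},\Xi_{h,z}),(\Xi_{v,z},\Xi_{h,y})$, and asserts that it commutes with both steady ISGs. You instead use the $\mathcal P_x$ row at $\mathcal P_x$ rounds and the $\mathcal P_y$ row at $\mathcal P_y$ rounds. This is legitimate because the two-step ISGs coincide with the corresponding three-step ones, so the table and Eq.~\eqref{eq:SMrounddoubleloops} apply verbatim.

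Your version is the one that actually holds. In the coordinates of Eq.~\eqref{eq:SMABindex}, the $y$-parallelogram $A_{m,n}B_{m+1,n}$ occupies $(m,n),(m+1,n),(m+1,n+1),(m+2,n+1)$. The vertical line through $(m,n)$ therefore meets it on a single site, so $\Xi_{v,z}$ anticommutes with $\mathcal P_y$ and is not a logical operator at $\mathcal P_y$ rounds. The paper's single-basis claim fails for that operator; your round-adapted argument does not need it.

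A round-independent weight-$L$ set does exist: $\Xi_{d,x},\Xi_{h,z},\Xi_{h,x},\Xi_{v,y}$ have a nondegenerate commutation matrix, and their translating double loops lie in both ISGs. It is not required, though, since staticity is already carried by $(\Xi_x,\Xi_z)$ together with $(\Omega_x,H_L)$ or $(\Xi_{h,x},H_L)$.
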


\begin{proof}
Starting from a trivial ISG, the first measurement gives $\mathcal S_0=\mathcal P_x$.
Using Proposition~\ref{prop:SMcentralizer}, the subsequent $\mathcal P_y$ measurement gives $\mathcal S_1=\langle\mathcal P_y,\mathcal D_x,\mathcal D_z\rangle$.
The following $\mathcal P_x$ measurement similarly gives $\mathcal S_2=\langle\mathcal P_x,\mathcal D_y,\mathcal D_z\rangle$.
The two ISGs then alternate periodically.
By Proposition~\ref{prop:SMparallelogram}, both have rank $L^2-2$ and therefore encode two logical qubits.

The first logical qubit is again inherited from the static subsystem code and may be chosen as
\begin{equation}
    (\widetilde X_1,\widetilde Z_1)=(\Xi_x,\Xi_z).
\label{eq:SMT2logical1}
\end{equation}
Both operators commute with the full gauge group and hence remain time independent throughout the two-step cycle.

For the second logical qubit, we again choose
\begin{equation}
    \widetilde Z_2=H_L.
\label{eq:SMT2Z2}
\end{equation}
Since $H_L\in Z(\mathcal G)$, it commutes with both steady ISGs, while Proposition~\ref{prop:SMparallelogram} implies that it is not generated by either of them.
Thus $H_L$ represents a nontrivial logical Pauli operator.

For $L=4n+2$, the static conjugate may be chosen as $\widetilde X_2=\Omega_x$.
As shown in Sec.~\ref{sec:SM4n2}, $\Omega_x$ commutes with all three parallelogram families and anticommutes with $H_L$, so it is in particular a valid time-independent conjugate for the two-step cycle.

For $L=4n$, the absence of the $\mathcal P_z$ measurement removes the obstruction found in Sec.~\ref{sec:SM4n}.
A convenient choice is
\begin{equation}
    \widetilde X_2=\Xi_{h,x}.
\label{eq:SMT2X24n}
\end{equation}
Direct intersection counting shows that $\Xi_{h,x}$ commutes with both $\mathcal P_x$ and $\mathcal P_y$, while $\{\Xi_{h,x},H_L\}=0$.
Hence $(\Xi_{h,x},H_L)$ is a time-independent logical Pauli pair.

Equivalently, the operators $\Omega_x$ for $L=4n+2$ and $\Xi_{h,x}$ for $L=4n$ commute with every measurement appearing in the two-step cycle while anticommuting with $H_L$.
Therefore $H_L\notin\langle\mathcal P_x,\mathcal P_y\rangle$ in either even-$L$ size class.
This contrasts with the three-step $L=4n$ protocol, where $H_L\in\langle\mathcal P_x,\mathcal P_y,\mathcal P_z\rangle$ and no static conjugate can exist.

Moreover, the simple weight-$L$ representatives introduced in Sec.~\ref{sec:SMT3distance} can be chosen to remain valid throughout the entire two-step cycle.
In particular, the $\mathcal P_x$-row representatives
\begin{equation}
    (\widetilde X_1,\widetilde Z_1)=(\Xi_{v,y},\Xi_{h,z}),\qquad
    (\widetilde X_2,\widetilde Z_2)=(\Xi_{v,z},\Xi_{h,y})
\label{eq:SMT2simplebasis}
\end{equation}
commute with both steady ISGs $\langle\mathcal P_x,\mathcal D_y,\mathcal D_z\rangle$ and $\langle\mathcal P_y,\mathcal D_x,\mathcal D_z\rangle$.
Thus, unlike in the three-step $L=4n$ protocol, no change of physical representative is required during the two-step cycle.
By the logical-class identification established in Sec.~\ref{sec:SMT3distance}, these operators are equivalent, modulo the corresponding ISG, to the static logical operators defined above.

Each line operator in Eq.~\eqref{eq:SMT2simplebasis} has $L$ mutually disjoint equivalent representatives generated by the corresponding double-loop stabilizers.
The disjoint-representative argument of Sec.~\ref{sec:SMT3distance} therefore applies directly and gives $d=L$ throughout the steady two-step cycle.
\end{proof}

\section{One-step rectangle stabilizer code}
\label{sec:SMT1}

We finally consider the one-step protocol discussed in the main text.
Instead of the parallelogram families, we measure the commuting four-qubit rectangle operators
\begin{equation}
    R_p=\sigma_i^x\sigma_j^y\sigma_k^x\sigma_l^y,
\label{eq:SMrectangle}
\end{equation}
with the geometry shown in Fig.~\ref{figS:T1logical}.
We denote by $\mathcal R=\langle R_p\rangle$ the stabilizer group generated by all translated rectangles.
Since the same commuting family is measured at every round, the Floquet period is $T=1$ and the instantaneous stabilizer group is simply $\mathcal R$.

\begin{proposition}[One-step rectangle code]
\label{prop:SMT1}
For odd $L$, the rectangle code has $\operatorname{rank}\mathcal R=L^2-1$ and encodes one static logical qubit with distance $d=L$.
For even $L$, it has $\operatorname{rank}\mathcal R=L^2-4$ and encodes four static logical qubits with distance $d=L/2$.
\end{proposition}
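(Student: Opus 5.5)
The plan is to reduce the rectangle code to a Wen-plaquette model by a local Pauli relabeling and then reuse the disjoint-representative argument of Lemma~\ref{lem:SMdisjoint}.

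\textbf{Step 1: coordinates and relabeling.} I will write $R_p=\sigma_i^x\sigma_j^y\sigma_k^x\sigma_l^y$ in the $(m,n)$ coordinates of Eq.~\eqref{eq:SMABindex}. Each qubit is then shared by four rectangles: on two of them it acts as $X$, on the other two as $Y$, and neighbouring rectangles overlap on two qubits with opposite letters. This is exactly the Wen structure $\sigma^x\sigma^y\sigma^x\sigma^y$ on a square plaquette, up to a twist in the identification of the torus. Commutation of all $R_p$ follows by counting the anticommuting overlaps on each shared pair.

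\textbf{Step 2: relations and rank.} I will encode a product $\prod_p R_p^{c_p}$ by the binary field $c_p$. At each qubit, the $X$ and $Y$ contributions must cancel separately. This gives two conditions: $c$ summed along one diagonal pair of adjacent plaquettes vanishes, and the same along the other pair. Solving, $c$ must be constant along two families of lines, so $c_p$ depends only on the sublattice/checkerboard class of $p$, compatibly with the periodic identification.
\begin{itemize}
\item For odd $L$, the checkerboard does not close consistently on the (twisted) torus. The only solution is $c\equiv$ const, giving one relation and $\operatorname{rank}\mathcal R=L^2-1$.
\item For even $L$, the plaquettes split into two decoupled sublattices, and the rectangles act on disjoint qubit subsets (the same sublattice split). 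Each sector admits two independent global relations (its two checkerboard sums), giving $L^2-4$.
\end{itemize}
Then $k=L^2-\operatorname{rank}\mathcal R$ yields $1$ and $4$ logical qubits. I expect this relation counting, and in particular verifying that there are no further relations and that the twist really kills the checkerboard for odd $L$, to be the main obstacle. I would first check it with the same peeling argument used in Proposition~\ref{prop:SMparallelogram}: choose a block of rectangles, show it is independent by exposing corners, and show it generates all remaining rectangles.

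\textbf{Step 3: logical operators and distance.} For odd $L$, I will check that $\Xi_x$ and $\Xi_y$ commute with every $R_p$, since each loop meets each rectangle in two sites of the opposite letter or none. They anticommute on their single intersection. Products of rectangles along a strip give the double loops $\Xi_x^j\Xi_x^{j+1}$ and $\Xi_y^j\Xi_y^{j+1}$, as in Eq.~\eqref{eq:SMPxDx}. These give $L$ disjoint representatives of each, so Lemma~\ref{lem:SMdisjoint}, applied exactly as in Proposition~\ref{prop:SMstaticdistance}, gives $d\ge L$; the weight-$L$ loops give $d=L$.

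For even $L$, I will restrict the loops to even and odd sites, forming $\Xi_{x}^{\rm even/odd}$ and $\Xi_y^{\rm even/odd}$ of weight $L/2$. I will check that they commute with $\mathcal R$ and pair up symplectically into four qubits, two per sector. I will also check that the strip products now give next-to-neighbour translations within a sector, which yields $L/2$ disjoint representatives per class. Since the four pairs form a complete symplectic basis, any nontrivial logical anticommutes with one of them, so Lemma~\ref{lem:SMdisjoint} gives weight $\ge L/2$. The restricted loops show $d=L/2$.

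Because the ISG is the fixed group $\mathcal R$, all these representatives are time independent, so every logical qubit is static.
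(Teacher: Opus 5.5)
Your relation counting in Step 2 is correct, and it is a cleaner form of the paper's strip-propagation argument. The paper follows a putative relation corner by corner along a strip, then from strip $j$ to strip $j+2$. You instead read the per-qubit $X$- and $Y$-cancellation conditions as two translation invariances of $c_p$ and count orbits: one for odd $L$, and four for even $L$ (two sublattices times two checkerboard classes). There is one slip in the odd-$L$ distance step. A strip of rectangles gives $\Xi_\alpha^j\Xi_\alpha^{j+2}$, not $\Xi_\alpha^j\Xi_\alpha^{j+1}$. For odd $L$ you get the nearest-neighbour pair only by composing $(L+1)/2$ of these, as in Eq.~\eqref{eq:SMT1oddDoubleLine}. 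This is harmless at odd $L$, but it is exactly the fact that drives the even case.

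The genuine gap is the even-$L$ logical basis, on which your bound $d\ge L/2$ rests. Restricting a loop to its even and odd sites is precisely the split into the two rectangle sublattices. It therefore gives one $x$-type and one $y$-type operator per sublattice, which is at most one qubit per sublattice. The four operators $\Xi_x^{\rm even/odd},\Xi_y^{\rm even/odd}$ cannot form four qubits. Worse, if you restrict a single intersecting pair $\Xi_x,\Xi_y$ this way, their unique intersection site lies in only one sublattice, so only one of the restricted pairs anticommutes.

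What you are missing is the second parity, that of the line index. Strip products translate a restricted line only by two. Hence the restrictions of $\Xi_\alpha^j$ and $\Xi_\alpha^{j+1}$ to the same sublattice are inequivalent, and this supplies the second qubit in each sublattice. The complete basis therefore has eight operators. In each sublattice take the restrictions of even- and odd-indexed $x$- and $y$-lines, pairing each $x$-class with the $y$-class whose full line meets it inside that sublattice. This is what the paper's $\Xi^{\rm even/odd}$ denote: line parity, taken within each sublattice. Only once you exhibit and check this basis does ``every nontrivial logical anticommutes with one of the four pairs'' hold. Lemma~\ref{lem:SMdisjoint}, with the $L/2$ step-two translates, then gives $d\ge L/2$.
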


\begin{figure}[t]
    \centering
    \includegraphics[width=0.95\columnwidth]{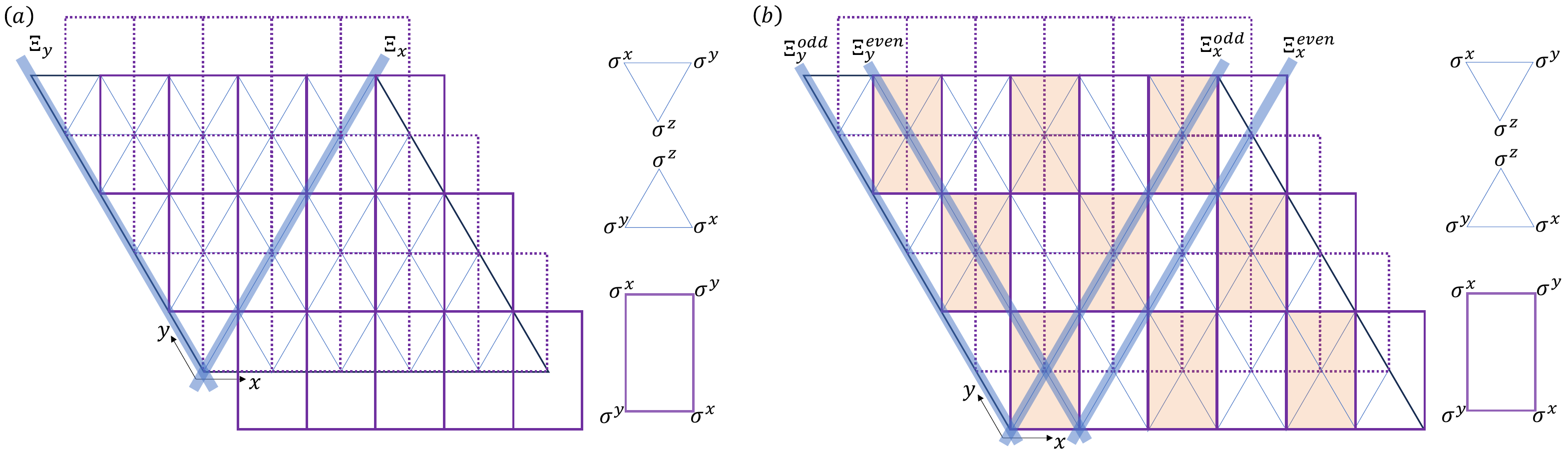}
    \caption{
    Parity-dependent structure of the one-step rectangle code.
    (a) For odd $L$, the periodic identification along the $y$ direction identifies the dashed rectangle pattern at the upper boundary with the solid pattern at the lower boundary, so all rectangle stabilizers belong to a single connected sublattice.
    Accordingly, multiplying all rectangle operators gives the unique global constraint.
    The blue noncontractible lines indicate a convenient logical Pauli pair, $\Xi_y$ and $\Xi_x$.
    (b) For even $L$, the dashed and solid rectangle patterns remain decoupled and form two independent rectangle sublattices.
    For the solid-line sublattice, the shaded rectangles multiply to the identity, while the remaining unshaded solid-line rectangles give a second independent identity.
    The dashed-line sublattice satisfies the analogous pair of constraints.
    The blue noncontractible lines indicate representative logical operators confined to one sublattice and one line class, for example ${\Xi}_y^{\rm odd}$, ${\Xi}_y^{\rm even}$, ${\Xi}_x^{\rm odd}$, and ${\Xi}_x^{\rm even}$.
    Thus each sublattice contributes two logical qubits, giving four logical qubits in total.
    }
    \label{figS:T1logical}
\end{figure}

\subsection{Odd $L$}
\label{sec:SMT1odd}

For odd $L$, the periodic identification along the $y$ direction connects the two apparent rectangle patterns into a single sublattice.
As illustrated in Fig.~\ref{figS:T1logical}(a), a dashed rectangle at the upper boundary is identified, after winding around the periodic direction, with a solid rectangle at the lower boundary.
Thus, all $L^2$ rectangle operators belong to one connected stabilizer sublattice.

We first determine the relations among the rectangle operators.
The product of all $L^2$ rectangles is the identity, giving one global constraint.
We now show that this is the only nontrivial relation.
Suppose that a product of a nonempty subset of rectangle operators is the identity, and choose one rectangle appearing in the product.
Cancellation at one of its corner qubits requires the neighboring rectangle along the same diagonal direction that shares this corner to appear as well.
Applying the same argument successively along the diagonal forces the entire strip of rectangles in that direction to be included.
The product of all rectangles in such a strip leaves the two noncontractible lines separated by two lattice spacings, $\Xi_\alpha^j\Xi_\alpha^{j+2}$.
For the full product to be the identity, the remaining line $\Xi_\alpha^{j+2}$ must in turn be cancelled by the neighboring strip, whose product contains $\Xi_\alpha^{j+2}\Xi_\alpha^{j+4}$.
Repeating this argument forces the strips labeled by $j,j+2,j+4,\ldots$ to appear.
Because $L$ is odd, repeated translation by two visits every line modulo $L$.
Hence, every strip, and therefore every rectangle, must be included.
Thus the product of all rectangles is the unique nontrivial relation, so $\operatorname{rank}\mathcal R=L^2-1$ and the code encodes one logical qubit.

A convenient logical Pauli pair is given by the two noncontractible loops shown in Fig.~\ref{figS:T1logical}(a),
\begin{equation}
    (\widetilde X,\widetilde Z)=(\Xi_x,\Xi_y).
\label{eq:SMT1oddLogical}
\end{equation}
Both commute with every rectangle stabilizer and anticommute with one another.

We next construct the stabilizers that translate these logical lines.
For either of the two line orientations appearing in Eq.~\eqref{eq:SMT1oddLogical}, multiplying the rectangle operators along the corresponding strip gives the product of two lines separated by two lattice spacings, $\Xi_\alpha^j\Xi_\alpha^{j+2}\in\mathcal R$.
Because $L$ is odd, translation by two visits every line modulo $L$.
More explicitly, since $2(L+1)/2=1\pmod L$, multiplying the successive step-two pairs gives
\begin{equation}
    \prod_{r=0}^{(L-1)/2}
    \left(\Xi_\alpha^{j+2r}\Xi_\alpha^{j+2r+2}\right)
    =
    \Xi_\alpha^j\Xi_\alpha^{j+1},
\label{eq:SMT1oddDoubleLine}
\end{equation}
where all intermediate line operators cancel.
Thus the nearest-neighbor double-line operator $\Xi_\alpha^j\Xi_\alpha^{j+1}$ also belongs to $\mathcal R$.

Multiplying a logical line by these nearest-neighbor double-line stabilizers translates it successively through all $L$ parallel positions.
Therefore each logical Pauli has $L$ mutually disjoint equivalent representatives.
Lemma~\ref{lem:SMdisjoint} gives $d\ge L$, while the noncontractible loops themselves have weight $L$.
Hence $d=L$.

\subsection{Even $L$}
\label{sec:SMT1even}

For even $L$, the periodic identification no longer connects the dashed and solid rectangle patterns.
As shown in Fig.~\ref{figS:T1logical}(b), they remain decoupled and form two independent rectangle sublattices, which we denote by $\mathcal R_{\rm s}$ and $\mathcal R_{\rm d}$.

We first determine the relations within the solid-line sublattice $\mathcal R_{\rm s}$.
Suppose that a nonempty product of solid-line rectangle operators is the identity and choose one rectangle appearing in the product.
As in the odd-$L$ case, cancellation at one of its corner qubits forces the neighboring rectangle along the same diagonal direction to appear.
Repeating this argument along the diagonal forces all rectangles of the same rectangle sublattice within that strip to be included.
The product of these rectangles gives a pair of noncontractible lines separated by two lattice spacings.
Cancellation of the second line then forces the corresponding strip of the same rectangle sublattice, displaced by two lattice spacings, to appear as well.
Thus the relation propagates through the sequence of strips $j,j+2,j+4,\ldots$ within the same rectangle sublattice.

For even $L$, translation by two does not visit all strips, but instead preserves their parity.
Thus the propagation fills either the even-strip class or the odd-strip class, but does not connect the two.
These two classes are represented by the shaded and unshaded solid-line rectangles in Fig.~\ref{figS:T1logical}(b).
Multiplying all shaded rectangles gives one identity, while multiplying all unshaded solid-line rectangles gives a second identity.
The propagation argument above also shows that any nontrivial relation is generated by these two, so there are exactly two independent constraints in $\mathcal R_{\rm s}$.
The dashed-line sublattice $\mathcal R_{\rm d}$ obeys the same two constraints by the analogous construction.

Since each rectangle sublattice contains $L^2/2$ generators, we obtain
\begin{equation}
    \operatorname{rank}\mathcal R_{\rm s}=\operatorname{rank}\mathcal R_{\rm d}=\frac{L^2}{2}-2, \qquad \operatorname{rank}\mathcal R=L^2-4.
\label{eq:SMT1evenRank}
\end{equation}
Hence the even-$L$ rectangle code encodes four logical qubits.

The blue lines in Fig.~\ref{figS:T1logical}(b) show a convenient logical basis for one of the two rectangle sublattices.
Because this sublattice separates into even and odd strip classes, the noncontractible loops split into the restricted operators $\Xi_x^{\rm even}$, $\Xi_x^{\rm odd}$, $\Xi_y^{\rm even}$, and $\Xi_y^{\rm odd}$.
The even-sector operators $(\Xi_x^{\rm even},\Xi_y^{\rm even})$ form one logical Pauli pair, while the odd-sector operators $(\Xi_x^{\rm odd},\Xi_y^{\rm odd})$ form the second.
Within each pair, the two operators anticommute because they have odd intersection parity, whereas operators belonging to different pairs commute.
Each restricted loop is supported on one of the two parity sectors of the corresponding noncontractible line and therefore contains $L/2$ qubits.
The other rectangle sublattice has the analogous two logical Pauli pairs.
Together, the two sublattices encode four logical qubits.

The same strip products used above also relate translated logical representatives.
For a fixed orientation and parity sector, they give $\Xi_\alpha^j\Xi_\alpha^{j+2}\in\mathcal R$.
Because $L$ is even, repeated translation by two visits exactly the $L/2$ lines of the same parity class.
Thus each logical generator has $L/2$ mutually disjoint equivalent representatives.
Every nontrivial logical Pauli operator anticommutes with at least one generator of the complete logical basis, so Lemma~\ref{lem:SMdisjoint} gives $d\ge L/2$.
Conversely, the restricted logical loops displayed in Fig.~\ref{figS:T1logical}(b) have weight $L/2$, and therefore
\begin{equation}
    d=\frac{L}{2}.
\label{eq:SMT1evenDistance}
\end{equation}

This completes the proof of Proposition~\ref{prop:SMT1}.

\section{Local Majorana realization with one open direction}

We finally give the direct microscopic Majorana realization of the static code and the suitable measurement protocols. 
Each unit cell of the honeycomb lattice contains two Majorana modes $\alpha_i$ and $\beta_i$. With the ordering shown in Fig.~\ref{fig:Majorana_rep}(a) of the main text, they are related to the effective spin operators by
\begin{equation}
\alpha_i=\sigma_i^x\prod_{k<i}(-\sigma_k^z),\qquad
\beta_i=\sigma_i^y\prod_{k<i}(-\sigma_k^z).
\label{eq:SM_JW}
\end{equation}
An individual Majorana operator therefore carries a Jordan--Wigner string whose support depends on the chosen ordering.

For the local triangle gauge generators, however, these strings cancel. With the vertex convention of Fig.~\ref{fig:Majorana_rep}(a), the two triangle generators become
\begin{equation}
A_\Delta=\beta_i\alpha_i\alpha_j\alpha_k,\qquad
B_\nabla=\alpha_k\beta_i\beta_j\beta_k,
\label{eq:SM_triangle_majorana}
\end{equation}
up to an irrelevant overall phase. Thus both triangle gauge generators have local four-Majorana parity representations.

On a fully periodic system, a linear Jordan--Wigner ordering necessarily closes across a seam.
Triangle generators crossing this seam generally acquire a nonlocal Jordan--Wigner string in the Majorana representation. 
To avoid this obstruction, we open one spatial direction and retain periodic boundary conditions only in the other direction, using the ordering shown in Fig.~\ref{fig:Majorana_rep}(a) in the main text.
The resulting cylindrical geometry contains $L(L-1)$ local $A_\Delta$ generators and $L(L-1)$ local $B_\nabla$ generators, each represented by a four-Majorana parity operator.

These local generators are sufficient to reproduce the same abstract gauge group as in the periodic spin description. 
Indeed, the independent generating set constructed in Proposition~\ref{prop:SMgauge_rank} can be chosen entirely from the retained local triangles. 
It contains $L(L-1)$ independent $A$-type generators and $(L-1)^2$ independent $B$-type generators, giving $\operatorname{rank}\mathcal G=(2L-1)(L-1)$.
The omitted boundary-crossing triangle operators are consequently generated by products of the retained local generators. 
Thus opening one spatial direction changes the microscopic realization but not the gauge-group algebra or the encoded subsystem-code structure.

We next consider the measurement operators. 
For each parallelogram that remains local after opening the boundary, substituting Eq.~\eqref{eq:SM_JW} cancels the Jordan--Wigner strings. 
The local $x$- and $y$-type parallelograms reduce to four-Majorana parity operators, whereas the local $z$-type parallelograms contain eight Majoranas,
\begin{equation}
\mathcal P_x,\mathcal P_y:\ 4\ {\rm Majoranas},\qquad
\mathcal P_z:\ 8\ {\rm Majoranas}.
\label{eq:SM_parallelogram_majorana}
\end{equation}
Likewise, an individual local rectangle operator reduces to a four-Majorana parity operator.
Importantly, however, locality of an individual measurement operator does not by itself imply that the complete periodic measurement group remains locally measurable after opening the boundary.

With the opening in Fig.~\ref{fig:Majorana_rep}(a) of the main text, the $L(L-1)$ $\mathcal P_x$ and $\mathcal P_z$ parallelograms in the first $L-1$ columns remain locally realizable, whereas one boundary strip of the $\mathcal P_y$ family is removed, leaving only $L(L-2)$ local $y$-type parallelograms. 
Let $\mathcal P_y^{\rm loc}$ denote the subgroup generated by these remaining operators. 
They are independent by a boundary-peeling argument.
Starting from an open boundary, cancellation of the Pauli operators on the exposed boundary sites forces all parallelograms in the outermost retained strip to be absent from any relation.
Repeating this argument successively through the bulk gives
\begin{equation}
\operatorname{rank}\mathcal P_y^{\rm loc}=L(L-2).
\label{eq:SM_Pyloc_rank}
\end{equation}

To show that the retained local measurements can be completed by the double-loop sector, choose the $(L-1)(L-2)$ local $y$-type parallelograms obtained by omitting one complete row from $\mathcal P_y^{\rm loc}$, together with the standard double-loop basis $D_y^{01},D_y^{12},\ldots,D_y^{L-2,L-1}$.
We now show that this combined set is independent.

Suppose that a product of these operators is the identity. 
At the boundary site $(0,L-1)$, none of the selected local parallelograms acts, while among the double loops only $D_y^{01}$ acts.
Hence $D_y^{01}$ cannot appear in the relation.
The open boundary then exposes the first column of selected local parallelograms, and cancellation on its boundary sites forces all $L-1$ parallelograms in this column to be absent. 
After removing this column, the same argument applies to the next boundary site, forcing $D_y^{12}$ to be absent, followed by all parallelograms in the next column. 
Repeating this procedure successively from left to right eliminates every double loop and every selected local parallelogram.
Thus the combined set is independent.

It contains $(L-1)(L-2)+(L-1)=(L-1)^2$
independent operators. 
Since all of them belong to $\mathcal P_y$ and $\operatorname{rank}\mathcal P_y=(L-1)^2$, they form a complete basis of $\mathcal P_y$. 
Consequently,
\begin{equation}
\mathcal P_y=\langle\mathcal P_y^{\rm loc},\mathcal D_y\rangle.
\label{eq:SM_Pyloc_completion}
\end{equation}
Equivalently, the retained local measurements already contain $L-2$ independent combinations of the $y$-type double loops. 
For example, the product of the $L$ local parallelograms in each retained complete strip gives a neighboring product $D_y^jD_y^{j+1}$ with $j=0,\cdots,L-3$.
Besides, any operator generated by the full $\mathcal P_y$ family can be reconstructed from the retained local measurements together with the $y$-type double loops. 
This will be important below for the even-$L$ three-step protocol, where the preceding rounds already supply $\mathcal D_y$. 
For example, the operator $F_y\in\mathcal P_y$ appearing in the logical update can be written as a product of elements of $\mathcal P_y^{\rm loc}$ and $\mathcal D_y$, so its eigenvalue is determined by the local $\mathcal P_y$ outcomes together with the inherited double-loop eigenvalues.

Since $\mathcal D_y\subset Z(\mathcal G)$, every element of $\mathcal G$ already commutes with the additional generators in $\mathcal D_y$. 
Consequently,
\begin{equation}
\mathcal G\cap C(\mathcal P_y^{\rm loc})=\mathcal G\cap C(\mathcal P_y).
\label{eq:SM_local_centralizer}
\end{equation}
This observation gives a local realization of the three-step protocol for even $L$. 
We cyclically start the period with the sequence
\begin{equation}
\mathcal P_z\longrightarrow\mathcal P_x\longrightarrow\mathcal P_y^{\rm loc}.
\label{eq:SM_Majorana_T3}
\end{equation}
Starting from a trivial ISG, the first two measurements give
\begin{equation}
\mathcal S_0=\mathcal P_z,\qquad
\mathcal S_1=\langle\mathcal P_x,\mathcal D_z,\mathcal D_y\rangle,
\label{eq:SM_Majorana_T3_first}
\end{equation}
where the second equality follows from Proposition~5 for even $L$. 
Since $\mathcal S_1\subset\mathcal G$, Eq.~\eqref{eq:SM_local_centralizer} gives
\begin{equation}
\mathcal S_2=\langle\mathcal P_y^{\rm loc},
\mathcal S_1\cap C(\mathcal P_y^{\rm loc})\rangle=\langle\mathcal P_y,\mathcal D_x,\mathcal D_z\rangle.
\label{eq:SM_Majorana_T3_second}
\end{equation}
Thus the truncated local $\mathcal P_y$ round produces exactly the same ISG as the full periodic $\mathcal P_y$ measurement.
Subsequent $\mathcal P_z$ and $\mathcal P_x$ rounds reproduce the remaining steady ISGs derived in Sec.~\ref{sec:SMT3}. 
Hence the even-$L$ $T=3$ protocol, including the partially dynamical case $L=4n$, admits a direct local Majorana realization.

For odd $L$, the same mechanism does not apply. 
After the preceding $\mathcal P_z$ and $\mathcal P_x$ rounds,
\begin{equation}
\mathcal S_1=\langle\mathcal P_x,\mathcal D_z\rangle,
\end{equation}
so the missing independent part of the $\mathcal D_y$ sector is not inherited. 
Measuring $\mathcal P_y^{\rm loc}$ therefore gives
\begin{equation}
\mathcal S_2=\langle\mathcal P_y^{\rm loc},\mathcal D_x,\mathcal D_z\rangle,\qquad\operatorname{rank}\mathcal S_2=L^2-2,
\label{eq:SM_Majorana_odd}
\end{equation}
rather than the rank-$(L^2-1)$ periodic ISG.
Additional boundary measurements are therefore required to reproduce the odd-$L$ three-step code.

The even-$L$ two-step protocol can be realized more directly by choosing the cyclically equivalent measurement sequence
\begin{equation}
\mathcal P_z\longrightarrow\mathcal P_x\longrightarrow\mathcal P_z\longrightarrow\cdots .
\label{eq:SM_majorana_T2}
\end{equation}
Both measurement families are locally realizable on the cylinder.
Their steady ISGs alternate as
\begin{equation}
\langle\mathcal P_x,\mathcal D_y,\mathcal D_z\rangle
\longleftrightarrow
\langle\mathcal P_z,\mathcal D_x,\mathcal D_y\rangle.
\end{equation}
By cyclic symmetry, this is equivalent to the two-step protocol in the main text and therefore has the same two static logical qubits and distance $d=L$. 
The local $\mathcal P_x$ measurements involve four Majoranas and the local $\mathcal P_z$ measurements involve eight.

Finally, although an individual rectangle operator has a local four-Majorana representation, simply truncating the rectangle family at the open boundary does not reproduce the periodic $T=1$ stabilizer code.
Opening one direction leaves only $L(L-1)$ local rectangles, so their rank is at most $L^2-L$. 
By contrast, the periodic rectangle group has
\begin{equation}
\operatorname{rank}\mathcal R=
\begin{cases}
L^2-1,&L\ {\rm odd},\\
L^2-4,&L\ {\rm even}.
\end{cases}
\end{equation}
Thus the bulk local rectangles alone do not supply the stabilizer rank required for the periodic code in general. 
A local realization of the remaining boundary cases therefore requires additional boundary checks, whose construction we leave for future work.

Consequently, with the present open-boundary construction, direct local microscopic Majorana realizations are established for the even-$L$ two- and three-step protocols. 
Both use only local four- and eight-Majorana parity measurements, while extending the construction more generally requires an appropriate boundary completion.

\section{Summary of the measurement protocols}
\label{sec:SMsummary}

For convenience, we summarize the logical structure of all measurement protocols in Table~\ref{tab:SMsummary}.
Here $(k_s,k_p)$ denotes the numbers of static and partially dynamical logical qubits, respectively.
The operators shown are convenient representatives of the corresponding logical Pauli classes, and equivalent representatives may differ by elements of the instantaneous stabilizer group.

\begin{table*}[t]
\centering
\begin{tabular}{c c c c c}
\toprule
Protocol
& System size
& $(k_s,k_p)$
& Logical Pauli representatives
& $d$
\\
\midrule

$T=3$
& $L$ odd
& $(1,0)$
& $(\Xi_x,\Xi_z)$
& $L$
\\[2pt]

$T=3$
& $L=4n+2$
& $(2,0)$
& $(\Xi_x,\Xi_z)$;\quad$(\Omega_x,H_L)$
& $L$
\\[2pt]

$T=3$
& $L=4n$
& $(1,1)$
& $(\Xi_x,\Xi_z),\quad (\widetilde X_2^{(t=0,1,2)},H_L)$
& $L$
\\[4pt]

$T=2$
& $L=4n+2$
& $(2,0)$
& $(\Xi_x,\Xi_z)$;\quad $(\Omega_x,H_L)$
& $L$
\\[2pt]

$T=2$
& $L=4n$
& $(2,0)$
& $(\Xi_x,\Xi_z)$;\quad$(\Xi_{h,x},H_L)$
& $L$
\\[2pt]

$T=1$
& $L$ odd
& $(1,0)$
& $(\Xi_x,\Xi_y)$
& $L$
\\[2pt]

$T=1$
& $L$ even
& $(4,0)$
& $\begin{array}{l}
(\Xi_{x,{\rm s}}^{\rm even},
 \Xi_{y,{\rm s}}^{\rm even}),\
(\Xi_{x,{\rm s}}^{\rm odd},
 \Xi_{y,{\rm s}}^{\rm odd}),\
(\Xi_{x,{\rm d}}^{\rm even},
 \Xi_{y,{\rm d}}^{\rm even}),\
(\Xi_{x,{\rm d}}^{\rm odd},
 \Xi_{y,{\rm d}}^{\rm odd})
\end{array}$
& $L/2$
\\
\bottomrule
\end{tabular}
\caption{
Summary of the logical structure of the one-, two-, and three-step measurement protocols.
Here $(k_s,k_p)$ denotes the numbers of static and partially dynamical logical qubits, respectively.
For the three-step protocol with $L=4n$, $\widetilde Z_2=H_L$ remains static, while its conjugate follows the three-round orbit in Eq.~\eqref{eq:SMDynamicOrbit}.
For the even-$L$ one-step code, the subscripts ${\rm s}$ and ${\rm d}$ denote the solid- and dashed-line rectangle sublattices, respectively, and the superscripts ${\rm even}$ and ${\rm odd}$ denote the two line-parity sectors within each sublattice.
}
\label{tab:SMsummary}
\end{table*}

\end{widetext}

\end{document}